\documentclass[letterpaper,twocolumn,10pt]{article}
\usepackage{usenix2019_v3}

\usepackage{graphicx}
\usepackage{amsmath,amssymb,amsfonts}
\usepackage[mathcal]{euscript}
\usepackage{amsthm}
\usepackage{booktabs}
\usepackage{array}
\usepackage{textcomp}
\usepackage{xspace}
\usepackage{tabularx}
\usepackage{listings}
\usepackage{xcolor}
\usepackage{tikz}
\usetikzlibrary{arrows.meta,fit,positioning}
\usepackage[title]{appendix}

\usepackage{url}
\usepackage{microtype}

\newcommand{\ATP}{\textnormal{\textsc{ATP}}}

\newcommand{\OTel}{\textnormal{OpenTelemetry}}

\newcommand{\CanPlane}{\mathbb{E}_{\mathrm{can}}}
\newcommand{\DerPlane}{\mathbb{I}_{\mathrm{der}}}

\newcommand{\ExecSpace}{\mathcal{W}}
\newcommand{\EventStream}{\mathcal{E}}
\newcommand{\ArtifactSpace}{\mathcal{A}}

\newcommand{\Fall}{\mathcal{F}_{\mathrm{all}}}
\newcommand{\Qe}{\mathcal{Q}_e}
\newcommand{\Qa}{\mathcal{Q}_a}

\newcommand{\Render}{\operatorname{Render}}

\newcommand{\Expect}{\mathbb{E}}

\makeatletter
\g@addto@macro{\UrlBreaks}{\do\/\do\-\do\.\do\a\do\b\do\c\do\d\do\e\do\f\do\g\do\h\do\i\do\j\do\k\do\l\do\m\do\n\do\o\do\p\do\q\do\r\do\s\do\t\do\u\do\v\do\w\do\x\do\y\do\z\do\0\do\1\do\2\do\3\do\4\do\5\do\6\do\7\do\8\do\9}
\def\@openbib@code{\setlength{\itemsep}{0pt}\setlength{\parsep}{0pt}\setlength{\parskip}{0pt}}
\makeatother

\hypersetup{
  hypertexnames=false,
  pdftitle={Agent-Native Telemetry: Verifiable State-Delta Evidence for Autonomous Operations},
  pdfauthor={Jun He and Deying Yu},
  pdfsubject={A transition-first, signed evidence ledger for agent-oriented operations},
  pdfkeywords={observability, telemetry protocol, ATP, state delta, evidence ledger, AIOps, anomaly detection, provenance, integrity, Kubernetes}
}

\makeatletter
\let\standardtitle\title
\renewcommand{\title}[2][]{%
  \standardtitle{#2}%
  \def\shorttitle{#1}%
  \ifx\shorttitle\@empty\else\markboth{#1}{#1}\fi
}
\makeatother

\newcolumntype{L}[1]{>{\raggedright\arraybackslash}p{#1}}
\newcolumntype{C}[1]{>{\centering\arraybackslash}p{#1}}
\newcolumntype{R}[1]{>{\raggedleft\arraybackslash}p{#1}}
\numberwithin{equation}{section}

\theoremstyle{plain}
\newtheorem{theorem}{Theorem}[section]

\newtheorem{proposition}[theorem]{Proposition}
\newtheorem{corollary}[theorem]{Corollary}
\theoremstyle{definition}

\newtheorem{definition}[theorem]{Definition}
\theoremstyle{remark}

\definecolor{codebg}{RGB}{248,249,250}
\definecolor{codeframe}{RGB}{220,224,230}

\begin{document}

\title[Agent-Native Telemetry]{
Agent-Native Telemetry: Verifiable State-Delta Evidence\\
for Autonomous Operations
}

\author{
  {\rm Jun He}\\
  OpenKedge.io
  \and
  {\rm Deying Yu}\\
  OpenKedge.io
}

\maketitle

\begin{abstract}
Operational telemetry is predominantly engineered for human reading: systems repeatedly serialize verbose prose, static keys, and redundant context across billions of log lines. As autonomous AI agents become primary operational consumers, feeding them traditional logs wastes scarce context capacity parsing lexical syntax rather than reasoning over system state changes---all while lacking cryptographic guarantees of provenance or collection completeness.
 
This paper introduces \emph{agent-native telemetry}, an operational evidence architecture for autonomous machine operators founded on verifiable state deltas rather than human prose. We present the \emph{Agent Telemetry Protocol} (\ATP{}) and the \emph{State-Delta Evidence Ledger}, an implementation that structures operational facts into four core evidence primitives---\emph{Transitions}, \emph{Observations}, \emph{Relations}, and \emph{State Checkpoints}---governed by content-addressed schemas, while isolating uncurated text as digest-verified opaque references. Producers sign and hash-chain batches for atomic collector append. Verified records feed two parallel agent access paths: a stateless protocol decoder emitting compact positional rows, and a stateful semantic gateway serving bounded graph capsules. We prove an information-preservation lower bound and formalize a \emph{ledger-relative verified negative} theorem for provable event non-occurrence. On distributed microservice benchmarks (AIOpsLab and \OTel{} Astronomy Shop), \ATP{} reduces raw wire payload and modeled cloud query scan costs by 96.4\% relative to \OTel{} JSON, reduces LLM context tokens by 88.8\% and query operations by 66.2\%, detects all 500 tested adversarial storage mutations, and yields zero successful prompt injections across 50 adversarial trials per \ATP{} configuration.

\end{abstract}

\section{Introduction}
\label{sec:introduction}

Modern enterprise cloud infrastructures generate immense volumes of operational telemetry, routinely producing tens to hundreds of terabytes of log data per enterprise daily, with hyperscalers ingesting petabytes to exabytes each day~\cite{rodrigues2021clp,wei2025logcrisp}. Managing, ingesting, and querying this continuous flood of machine data has become one of the largest and fastest-growing line items in cloud infrastructure budgets. Standard cloud log ingestion costs typically range from \$0.10 to \$0.50 per gigabyte (e.g., \$0.50/GB for AWS CloudWatch Logs, alongside indexing charges in commercial observability platforms)~\cite{awsCloudWatchPricing,ecs}. Beyond ingestion, querying stored logs introduces substantial recurring financial overhead: services such as AWS CloudWatch Logs Insights bill \$0.005 per gigabyte of data scanned~\cite{awsCloudWatchPricing}. In an enterprise cluster generating 10~TB of logs daily, a single diagnostic query across that history costs \$50, and running continuous automated alert evaluations and anomaly detection jobs across large fleets routinely costs tens to hundreds of thousands of dollars monthly in log scanning fees alone. Yet, empirical studies demonstrate that 80\% to 90\% of these ingested and scanned bytes consist of static boilerplate: repeated JSON keys, IP addresses, schema headers, timestamps, and unchanged system attributes rather than changing operational facts~\cite{wang2024muslope,tan2026hyve}.

For decades, this verbose text format was tolerated because logs were authored for human engineers scanning dashboards, reading console outputs, or running regex queries. Site Reliability Engineers (SREs) currently spend an estimated 30\% to 50\% of incident triage time manually filtering and correlating distributed text logs. Today, however, IT operations and DevOps are undergoing a rapid shift toward autonomous agentic workflows: industry forecasts project that by 2027--2029, over 70\% to 75\% of enterprises will deploy agentic AIOps to operate IT infrastructure and execute closed-loop incident triage, displacing human dashboards as the primary telemetry consumers~\cite{gartnerAIOps2024,chen2025aiopslab}.

When autonomous AI agents (such as LLM reasoning engines and automated diagnostic loops) are tasked with inspecting conventional log streams, a fundamental mismatch emerges across existing industry paradigms:
\begin{itemize}
  \setlength{\itemsep}{1pt}
  \item \textbf{Unstructured Text and Compressed Codecs:} Codecs such as CLP~\cite{rodrigues2021clp} and $\mu$Slope~\cite{wang2024muslope} separate static templates from dynamic variables to compress logs and accelerate regex search, but still treat telemetry as isolated text strings, lacking native state-machine and causal context.
  \item \textbf{Unified Telemetry Envelopes:} Standards such as OpenTelemetry Logs and OTLP~\cite{otelLogs,otelEvents} structure attributes and correlation IDs, but emit verbose JSON or gRPC envelopes with high-entropy repeating keys, creating a heavy storage footprint optimized for human visualization (e.g., Grafana dashboards) rather than token-efficient machine reasoning.
  \item \textbf{Semantic and Vector Indexing:} Methods such as LogLLM~\cite{liu2023logllm} and LogEvent2vec~\cite{mza2020logevent2vec} map log text into dense embedding spaces for nearest-neighbor anomaly detection. However, generating continuous embeddings at scale incurs steep compute overhead, and vectorization destroys exact deterministic parameters (e.g., specific port numbers, return codes, memory addresses, and transaction identifiers) indispensable for precise root-cause analysis.
\end{itemize}

Beyond representation inefficiencies, autonomous agents waste scarce context capacity and inference compute attempting to reconstruct causality from linear time-series streams. Scanning linear streams requires $\mathcal{O}(N)$ processing over millions of lines across distributed services, leaving state machine transitions and topological invariant violations implicit. Furthermore, traditional pipelines lack cryptographic hash chaining and explicit collection receipts, meaning an agent cannot mathematically distinguish whether an absence of errors reflects healthy execution or unmonitored blind spots and packet drops. Finally, interspersing arbitrary user payloads, HTTP query strings, and exception traces directly into the primary log stream exposes LLM agents to passive prompt-injection attacks~\cite{pasquini2026when,karanjai2026context}.

To resolve this mismatch, this paper introduces \emph{agent-native telemetry}, a clean-slate operational evidence paradigm engineered specifically for automated reasoning agents. We design and implement the \emph{Agent Telemetry Protocol} (\ATP{}) and the \emph{State-Delta Evidence Ledger}, an architecture where the fundamental telemetry primitive is not a human-readable text line, but an authenticated, typed state delta recorded in an append-only ledger. The ledger organizes operational facts into four explicit evidence primitives---\emph{Transitions}, \emph{Observations}, \emph{Relations}, and \emph{State Checkpoints}---governed by immutable content-addressed schema manifests, while variable-length diagnostic text is quarantined as separately addressable, digest-verified \emph{opaque evidence}. Verified records feed two parallel agent access paths: a stateless protocol-native decoder that emits compact tabular rows, and a stateful semantic state gateway that maintains a versioned operational state graph and returns bounded evidence capsules.

We establish the mathematical foundations of agent-native evidence, proving an information-preservation lower bound for canonical state retention and formalizing a \emph{ledger-relative verified negative} theorem that enables agents to mathematically prove event non-occurrence over observed scopes. Finally, we conduct a comprehensive empirical evaluation across distributed microservice testbeds, demonstrating substantial representation, context token, triage latency, and security gains.

Specifically, this paper makes five core contributions:
\emph{First}, we \textbf{formulate a transition-centered evidence model} that structures operational facts into four explicit primitives (\emph{transitions}, \emph{observations}, \emph{relations}, and \emph{state checkpoints}) while isolating uncurated diagnostic text behind a digest-verified opaque boundary. 
\emph{Second}, we \textbf{design the compact canonical protocol} (\ATP{}), utilizing positional binary tuples and content-addressed schemas to achieve sub-byte cryptographic metadata overhead per record. 
\emph{Third}, we \textbf{establish formal evidence contracts} proving an information-preservation lower bound and formalizing a ledger-relative negative verification theorem under declared observation boundaries. 
\emph{Fourth}, we \textbf{architect a dual-path access layer} combining a stateless tabular stream decoder with a stateful semantic gateway to decouple representation compactness from graph-scoped topological retrieval. 
\emph{Fifth}, we \textbf{build an end-to-end prototype and empirically demonstrate} across microservice benchmarks (AIOpsLab and \OTel{} Astronomy Shop) a 96.4\% reduction in wire footprint and modeled cloud scan costs, an 88.8\% reduction in LLM context tokens, 100\% detection across 500 tested adversarial storage-mutation trials, zero successful prompt injections across 50 adversarial trials per \ATP{} configuration, and certified negative verification.

The remainder of this paper is organized as follows. Section~\ref{sec:agent-native} establishes the agent-native telemetry model, formal definitions, and threat model. Section~\ref{sec:architecture} details the ledger architecture, canonical protocol, and parallel access paths. Section~\ref{sec:formal-contract} establishes the formal evidence contracts, coding lower bounds, and verified non-occurrence guarantees. Section~\ref{sec:evaluation} details the reference implementation, empirical benchmarks, and KPI evaluations. Section~\ref{sec:discussion} discusses security, privacy, and operational trade-offs. Section~\ref{sec:related-work} reviews related work and comparative positioning. Section~\ref{sec:conclusion} concludes.

\section{Agent-Native Telemetry Model}
\label{sec:agent-native}

\subsection{Operational Agent Requirements}

We define \emph{agent-native} telemetry as the operational evidence supplied \emph{to} automated reasoning agents for monitoring and root-cause analysis, distinct from observing an agent's internal execution chains~\cite{alsayyad2026agenttrace}. These agents impose four fundamental requirements that diverge sharply from human dashboards:
\begin{itemize}
  \setlength{\itemsep}{1pt}
  \item \textbf{Deterministic Structure and Explicit State Transitions:} Reasoning agents require unambiguous entity identifiers, explicit physical units, and typed state transitions rather than ad-hoc string formatting that varies across software versions.
  \item \textbf{High Information Density and Bounded Context:} Because model context windows and inference budgets are finite, telemetry must convey maximal operational change per token while supporting targeted sub-graph and time-bounded range retrieval.
  \item \textbf{Verifiable Provenance and Explicit Coverage:} When making high-stakes diagnostic decisions, agents must mathematically distinguish whether an absence of telemetry reflects healthy execution or unmonitored blind spots and packet drops.
  \item \textbf{Structural Isolation of Untrusted Content:} External strings (e.g., HTTP query parameters, exception messages) must be structurally isolated to prevent passive prompt injection from hijacking agent reasoning.
\end{itemize}

\subsection{ATP Model and Evidence Primitives}
\label{sec:agent-native:primitives}

We formalize the agent telemetry protocol as follows:

\begin{definition}[Agent Telemetry Protocol]
\label{def:atp}
The Agent Telemetry Protocol (\ATP{}) is the production of schema-resolved, authenticated operational evidence---centered on state transitions and supplemented by observations, relations, state checkpoints, and opaque-evidence references---whose canonical records can be deterministically verified and decoded into bounded agent-facing representations with explicit provenance and collection coverage.
\end{definition}

In \ATP{}, every schema registered in the system declares exactly one of four foundational evidence primitives:
\begin{enumerate}
  \setlength{\itemsep}{1pt}
  \setlength{\parsep}{0pt}
  \item \textbf{Transition:} Records an explicit discrete change in an entity's operational state or execution outcome (e.g., a Kubernetes Deployment transitioning from \textsf{Progressing} to \textsf{Available}, or an RPC completing with \textsf{Status:500}). Transitions optionally carry an \texttt{intent\_ref} (resolving to an \texttt{IntentHash}) denoting the parent agent task, control-plane reconciliation loop, or user workflow that triggered the transition.
  \item \textbf{Observation:} Records a point-in-time measurement, health check outcome, or invariant assertion (e.g., CPU utilization crossing a threshold, or an active probe failure).
  \item \textbf{Relation:} Records the dynamic addition, modification, or removal of a structural relationship (e.g., service call dependencies, resource ownership, or declared causal links). Relations induce a versioned operational state graph $G_t=(V_t, A_t)$. Like transitions, relations can bind to an \texttt{intent\_ref} to provide causal lineage back to high-level operational intents.
  \item \textbf{State Checkpoint:} A producer-emitted state snapshot bundling active entity states, sequence markers, and drop counters. This allows downstream consumers to reconstruct operational state without replaying history from genesis.
\end{enumerate}

Crucially, the architecture maintains a strict distinction between two checkpoint mechanisms: (1)~\emph{Producer State Checkpoints}, emitted in-band by producers to capture application state and drop counters for fast replay; and (2)~\emph{Independent Chain-Head Checkpoints}, emitted out-of-band by the collector to an independent storage or witness channel, committing to the highest accepted sequence and batch root to prevent adversarial suffix deletion (rollback) on ledger storage.

\subsection{Observation Boundary and Evidence Planes}

Let $\ExecSpace$ represent the space of physical distributed system executions, and let $\EventStream$ denote the universe of finite canonical telemetry record streams. An instrumentation profile is formally defined as
\begin{equation}
\Omega = (\mathcal{P},\, \mathcal{S},\, \mathcal{B},\, \mathcal{C},\, \Delta_{\mathrm{clk}}),
\end{equation}
where $\mathcal{P}$ is the set of participating authorized producers, $\mathcal{S}$ is the set of registered schemas, $\mathcal{B}$ denotes known instrumentation blind spots, $\mathcal{C}$ specifies distributed context-propagation mechanisms, and $\Delta_{\mathrm{clk}}$ bounds clock uncertainty across nodes. 

The observation mapping $O_\Omega: \ExecSpace \rightarrow \EventStream$ maps an execution $w \in \ExecSpace$ to the canonical record stream $E = O_\Omega(w) \in \EventStream$ emitted by trusted instrumentation under profile $\Omega$. All coverage statements and negative guarantees in this paper are strictly scoped to $O_\Omega$ and collector receipts.

\begin{definition}[Observation boundary]
\label{def:observation-boundary}
Canonical telemetry authenticates what was observed and accepted by instrumentation under profile $\Omega$; it does not assert observation of uninstrumented system activity. Cryptographic sequence continuity and valid signatures prove that no accepted records were altered or omitted in transit, but cannot prove that an uninstrumented application event occurred.
\end{definition}

We formalize two distinct evidence planes:
\begin{itemize}
  \setlength{\itemsep}{0pt}
  \item \textbf{Canonical Plane ($\CanPlane$):} Contains accepted signed batches, resolved schema manifests, state checkpoints, collection receipts, and digest references to opaque objects. These constitute immutable operational ground truth under the trust model.
  \item \textbf{Derived Plane ($\DerPlane$):} Contains materializations generated from $\CanPlane$, including reconstructed state graphs, anomaly scores, vector embeddings, suspected causal graphs, and LLM-generated summaries. Every derived artifact explicitly records its derivation version and exact input ledger ranges.
\end{itemize}
Derived artifacts guide retrieval and inference, but cannot mutate canonical records or act as authorization tokens for automated remediation.

\subsection{Failure and Threat Model}

We assume that producer SDKs and the append-time collector/verifier are trusted components running in protected execution domains with secure private key storage. Nodes may experience crash-stop and crash-recovery failures.

In contrast, network transport, ledger storage, schema registries, and opaque object stores are treated as untrusted and potentially adversarial. The adversary may drop, duplicate, reorder, delay, splice, mutate, truncate, or withhold serialized bytes. A bounded, durable local buffer in the producer SDK preserves unacknowledged batches across network partitions until collector acknowledgment.

Compromised producer hosts, rogue collectors, stolen signing keys, and application-level omission prior to SDK ingestion fall outside the cryptographic guarantee. Furthermore, opaque evidence storage may be temporarily unavailable even when its signed digest and metadata on the ledger are intact.

\subsection{Trusted Computing Base and Governance}

The Trusted Computing Base (TCB) consists of:
(1)~the producer SDK and its signing key;
(2)~the collector batch verifier;
(3)~the independent chain-head checkpoint signer and key;
(4)~the read-time range verifier; and
(5)~the deployment schema-authorization policy.

The schema-authorization policy establishes trusted publishers, enforces producer-to-schema bindings, and governs schema lifecycle. Gateway indexes, summaries, and human-rendered prose are derived artifacts; system correctness relies solely on independent range verification against the TCB.

\section{Ledger Architecture and Protocol}
\label{sec:architecture}

\subsection{Ingestion and Verification Pipeline}

Figure~\ref{fig:pipeline} illustrates the State-Delta Evidence Ledger architecture, structured around four primary pipeline stages:

\begin{figure*}[t]
\centering
\resizebox{0.92\textwidth}{!}{
\begin{tikzpicture}[
  node distance=0.6cm and 0.8cm,
  font=\sffamily\scriptsize,
  box/.style={draw, rounded corners=2pt, minimum height=0.62cm, align=center, fill=white, inner sep=2.5pt},
  comp/.style={box, fill=blue!8, draw=blue!50!black, thick},
  store/.style={box, fill=orange!10, draw=orange!60!black, double, double distance=1pt},
  agent/.style={box, fill=green!10, draw=green!50!black, thick},
  human/.style={box, fill=gray!10, draw=gray!60!black},
  failure/.style={draw=red!70!black, dashed, line width=1pt, rounded corners=4pt, inner sep=3pt},
  flow/.style={-{Stealth[scale=0.85]}, line width=1.0pt, draw=blue!70!black},
  ref/.style={-{Stealth[scale=0.75]}, dashed, line width=0.75pt, draw=gray!80},
  control/.style={-{Stealth[scale=0.75]}, dashdotdotted, line width=0.75pt, draw=purple!80!black},
  group/.style={draw=gray!40, dashed, rounded corners=3pt, inner sep=4pt}
]

\node[comp] (work) {Target System\\Workload};
\node[comp, right=0.8cm of work] (sdk) {Producer SDK\\(Sign, Buffer, Hash)};
\node[comp, right=0.8cm of sdk] (collector) {Collector \& Verifier\\(Atomic Append)};
\node[store, right=0.8cm of collector] (ledger) {Untrusted Ledger\\Storage};
\node[comp, right=0.8cm of ledger] (range) {Read-Time\\Range Verifier};

\node[comp, above right=0.55cm and 0.65cm of range] (decoder) {Stateless Protocol\\Native Decoder};
\node[agent, right=0.65cm of decoder] (agent_tab) {AI Agent Reasoning\\(Tabular Stream / Config D)};
\node[human, above=0.3cm of agent_tab] (render) {Human View\\(On Demand)};

\node[comp, below right=0.55cm and 0.65cm of range] (gateway) {Stateful Semantic\\State Gateway};
\node[agent, right=0.65cm of gateway] (agent_gate) {AI Agent Reasoning\\(Evidence Capsules / Config E)};

\node[store, above=0.65cm of sdk] (schema) {Schema Registry\\(Content-Addressed)};
\node[store, below=0.65cm of ledger] (head) {Chain-Head Checkpoints\\(Independent Channel)};
\node[store, below=0.65cm of gateway] (opaque) {Opaque Blob Store\\(Digest-Addressed)};

\node[comp, fill=purple!8, draw=purple!60!black, below=0.8cm of opaque] (harness) {Benchmark Harness \& Fault Injector};

\node[group,fit=(sdk)(collector)(ledger)(range)(decoder)(gateway)(schema)(opaque)(head),
  label={[font=\sffamily\footnotesize]above:Core State-Delta Evidence Ledger architecture}] {};
\node[group,fit=(work),label={[font=\sffamily\scriptsize]above:External workload}] {};
\node[group,fit=(agent_tab)(agent_gate)(render),label={[font=\sffamily\scriptsize]above:Operational consumers}] {};
\node[failure,fit=(ledger)] {};


\draw[flow] (work) -- (sdk);
\draw[flow] (sdk) -- (collector);
\draw[flow] (collector) -- (ledger);
\draw[flow] (ledger) -- (range);

\draw[flow, rounded corners=3pt] (range.east) -- ++(0.2,0) |- (decoder.west);
\draw[flow, rounded corners=3pt] (range.east) -- ++(0.2,0) |- (gateway.west);

\draw[flow] (decoder) -- (agent_tab);
\draw[flow, rounded corners=3pt] ([xshift=-0.2cm]decoder.north) |- (render.west);
\draw[flow] (gateway) -- (agent_gate);

\draw[ref] (schema) -- (sdk);
\draw[ref, rounded corners=3pt] (schema.east) -| ([xshift=-0.2cm]range.north);
\draw[flow, rounded corners=3pt] ([xshift=0.2cm]collector.south) |- (head.west);
\draw[ref, rounded corners=3pt] (head.east) -| ([xshift=-0.2cm]range.south);

\draw[ref] (gateway) -- (opaque);
\draw[ref, rounded corners=3pt] ([xshift=-0.2cm]sdk.south) |- (opaque.west);

\draw[control, rounded corners=5pt] (harness.west) -| (work.south);
\draw[control, rounded corners=5pt] (agent_gate.east) -- ++(0.25,0) |- (harness.east);
\draw[control, rounded corners=5pt] ([xshift=0.2cm]harness.north) -- ++(0, 0.25) -| ([xshift=0.4cm]opaque.east) |- ([yshift=-0.15cm]gateway.east);

\end{tikzpicture}
}
\caption{State-Delta Evidence Ledger architecture. A read-time range verifier mediates retrieval from potentially adversarial ledger storage before the stateless decoder and stateful gateway branches. The checkpoint channel lies outside the ledger-storage failure domain (dotted red outline). Heavy solid arrows carry canonical flow, thin dashed arrows carry references, and dash--dot arrows carry experimental control or measurement.}
\label{fig:pipeline}
\end{figure*}

\begin{enumerate}
  \setlength{\itemsep}{2pt}
  \item \textbf{Producer Ingestion and Batching:} The typed Producer SDK resolves schemas against a content-addressed schema registry, assigns contiguous sequence numbers scoped to the producer and boot epoch, and links each batch to the cryptographic root of its immediate predecessor via \texttt{previous\_root}. It canonicalizes the records, generates a Merkle root over the batch, signs the composite batch commitment, and stores unacknowledged batches in a bounded durable buffer. The SDK can be deployed either as an in-process library or as a node-level DaemonSet sidecar. In the sidecar topology, routine high-frequency observations (e.g., periodic health checks) update a rolling state vector in ephemeral memory, emitting state checkpoints at coarser intervals. Uncurated diagnostic text (such as stack traces or heap profiles) is buffered ephemerally on-node and escalated to remote opaque storage only when a state transition or invariant violation is triggered, minimizing cloud ingestion overhead.
  \item \textbf{Collector Verification and Atomic Append:} The trusted collector verifies complete batches against the producer's registered public key, validates schema hashes, checks previous-root continuity against retained chain state, and enforces monotonic sequence progression. If valid, the batch is atomically appended to the ledger and acknowledged; any sequence mismatch, missing predecessor, or malformed record triggers batch rejection.
  \item \textbf{Independent Chain-Head Checkpointing:} To defend against post-compromise ledger rollback (suffix deletion) on untrusted storage, the collector periodically emits signed chain-head checkpoints to an independent storage or witness channel. Each checkpoint commits to the highest verified sequence, batch root, and epoch.
  \item \textbf{Read-Time Range Verification:} Before records are decoded or ingested by downstream consumers, a deterministic read-time range verifier evaluates the query interval $[t_{\mathrm{start}}, t_{\mathrm{end}}]$. The verifier validates signature validity, confirms previous-root hash-chain continuity, cross-references highest sequences against the independent checkpoint channel, and emits an explicit coverage status receipt ($\mathsf{status} \in \{\textsf{complete}, \textsf{truncated}, \textsf{tampered}, \textsf{gap}\}$).
\end{enumerate}

\subsection{Record Encoding and Batch Structure}
\label{sec:protocol:record}

Each operational record is encoded as a compact binary tuple:
\begin{center}
\ttfamily\small
schema\_ref $\mid$ entity\_ref $\mid$ time\_delta $\mid$ presence\_bitmap $\mid$ positional\_values.
\end{center}
\begin{itemize}
  \setlength{\itemsep}{0pt}
  \item \texttt{schema\_ref}: A variable-length batch-local integer alias mapping to a full schema digest $H_S$ in the batch header.
  \item \texttt{entity\_ref}: A batch-local alias resolving to the canonical entity identifier (e.g., \texttt{deployment/payments}).
  \item \texttt{time\_delta}: Milliseconds elapsed since the batch \texttt{base\_time}.
  \item \texttt{presence\_bitmap}: A bitmask indicating presence or absence of declared optional schema fields.
  \item \texttt{positional\_values}: Canonical packed values strictly ordered by schema slot definition.
\end{itemize}
For a batch starting at sequence \texttt{first\_sequence}, record $i \ge 0$ implicitly possesses sequence number $\texttt{first\_sequence} + i$, eliminating per-record sequence overhead. Because fields strictly adhere to typed, positional schemas, batch segments stored in cold analytical tiers project directly into zero-copy columnar layouts (e.g., Apache Arrow~\cite{apacheArrow} and Parquet~\cite{apacheParquet}), enabling high-throughput vectorized analytical scans over historical state deltas without text parsing overhead.

A canonical batch groups ordered records under a signed 14-field header:
\begin{center}
\ttfamily\scriptsize
\begin{tabular}{r@{\ }l@{\hspace{0.8em}}r@{\ }l}
1 & protocol\_version & 8 & schema\_dictionary \\
2 & producer\_id & 9 & entity\_dictionary\_delta \\
3 & boot\_epoch & 10 & previous\_root \\
4 & first\_sequence & 11 & encoded\_records \\
5 & record\_count & 12 & merkle\_root \\
6 & base\_time & 13 & signing\_key\_id \\
7 & clock\_quality & 14 & signature
\end{tabular}
\end{center}

Individual records are leaves in a binary Merkle tree yielding $\mathit{merkle\_root}$. The composite batch root binds this Merkle tree with all metadata, protocol version, schema dictionaries, and chain continuity:
\begin{equation}
\label{eq:batch-root}
\begin{aligned}
\mathit{batch\_root} = H\big(&D_{\mathrm{batch}} \parallel \mathit{protocol\_version} \parallel \\
&\mathit{producer\_id} \parallel \mathit{boot\_epoch} \parallel \\
&\mathit{first\_sequence} \parallel \mathit{record\_count} \parallel \\
&\mathit{base\_time} \parallel \mathit{clock\_quality} \parallel \\
&H(\mathit{schema\_dict}) \parallel \\
&H(\mathit{entity\_dict\_delta}) \parallel \\
&\mathit{previous\_root} \parallel \mathit{merkle\_root}\big).
\end{aligned}
\end{equation}
where $H(\cdot)$ denotes the cryptographic hash function (\textsf{SHA-256}), $D_{\mathrm{batch}}$ is a domain-separation prefix, and $\parallel$ denotes byte concatenation. The producer signature covers $\mathit{batch\_root}$ under its private key.

\subsection{Content-Addressed Schemas}

Schemas are immutable and identified by the SHA-256 digest of their canonical manifest:
\begin{equation}
H_S = \mathrm{SHA256}(\mathrm{CanonicalManifest}(S)).
\end{equation}
Any modification to field names, constraints, enums, or physical units generates a new digest $H_S$, preventing silent semantic drift.

\begin{table}[t]
\centering
\scriptsize
\renewcommand{\arraystretch}{0.92}
\setlength{\tabcolsep}{2.5pt}
\caption{Concrete running example across representation layers for a service transition event.}
\label{tab:running-example}
\begin{tabularx}{\columnwidth}{@{} >{\raggedright\arraybackslash}p{1.4cm} >{\raggedright\arraybackslash}X @{}}
\toprule
\textbf{Layer} & \textbf{Concrete representation} \\
\midrule
1.~Schema & \texttt{k8s.deployment.status:1.0.0; [old: Ready, new: Degraded, ready: 1, total: 3, reason: CrashLoop]} \\
2.~Dict & \texttt{schemas[0]=sha256:7f8a...; entities[0]="deploy/payments"} \\
3.~Wire & \texttt{0x00 0x00 0x0A 0x00 0x00 0x01 0x0001 0x0003 0x01} (18\,B) \\
4.~Decoder & \texttt{\# k8s.status: entity=deploy/payments, new=Degraded, ready=1/3, reason=CrashLoop} \\
5.~Gateway & \texttt{\{"focus": "deploy/payments", "state": \{"status": "Degraded", "ready": 1/3\}, "active\_invariants": ["replica\_mismatch"], "causal": ["pod/pay-7d9b (CrashLoop)"], "coverage": "complete"\}} \\
6.~Human & \textit{``At 10:00:00Z, deployment `payments' became Degraded (1/3 replicas) due to CrashLoop.''} \\
\bottomrule
\end{tabularx}
\end{table}

Table~\ref{tab:running-example} traces a single operational fact---a Kubernetes Deployment transitioning to degraded status---across all representation layers. On the canonical wire, the record requires only tens of bytes. The stateless decoder produces a compact, tabular model-boundary representation, while the stateful gateway produces an evidence capsule with topological context.

\subsection{Parallel Agent Access Paths}

To decouple representation efficiency from state-reconstruction and query planning, \ATP{} provides two parallel agent-facing access paths, alongside an on-demand human rendering path:

\begin{enumerate}
  \setlength{\itemsep}{2pt}
  \item \textbf{Stateless Protocol-Native Decoder:} A lightweight, zero-state stream decoder that unpacks verified binary batches into compact positional rows. For a sequence of records conforming to schema $S$, the decoder emits the schema header once, followed by compact value rows:
\par\smallskip\noindent\hfil{\ttfamily\tiny\begin{tabular}{l}
\# schema: k8s.pod.transition (old, new, exit) \\
2026-08-15T10:00:00.010Z | pod/pay-7d9b | Run | Fail | 137 \\
2026-08-15T10:00:00.015Z | pod/auth-4a2c | Run | Fail | 137
\end{tabular}}\hfil\par\smallskip\noindent
This representation eliminates repetitive JSON field names entirely, minimizing input tokens.
  \item \textbf{Stateful Semantic State Gateway:} An in-memory, queryable gateway maintaining the active versioned operational state graph $G_t=(V_t, A_t)$. Rather than forcing an LLM agent to inspect linear event logs, the gateway serves bounded \emph{evidence capsules} containing:
\begin{itemize}
  \setlength{\itemsep}{0pt}
  \item Active entity state vector $V_t[e]$ and transition history over window $W$;
  \item Active invariant violations (e.g., failed health probes, replica imbalances);
  \item $k$-hop causal and topological dependency neighborhood $(V'_t, A'_t)$;
  \item Cryptographic coverage status receipt $\mathsf{status} \in \{\textsf{complete}, \textsf{truncated}, \textsf{tampered}, \textsf{gap}\}$.
\end{itemize}
  \item \textbf{On-Demand Human Explanations:} For human operators, human-readable explanations are generated on demand via a deterministic template renderer $\Render_S(p)$ or local LLM formatting pass. Human prose is never serialized on the canonical ledger wire.
\end{enumerate}

\subsection{Opaque Evidence Isolation}

Variable-length, unstructured diagnostics (e.g., stack traces, heap profiles, raw HTTP request bodies) are isolated from canonical state records. When captured, an opaque object is hashed, stored out-of-band in object storage, and represented on the ledger by a structured reference tuple:
\begin{equation}
\label{eq:opaque-ref}
\begin{aligned}
\mathit{OpaqueRef} = \langle &\mathit{opaque\_id},\, \mathit{media\_type},\, \mathit{byte\_length},\, \\
&H_{\mathrm{payload}},\, \mathit{storage\_uri},\, \mathit{retention\_class}\rangle.
\end{aligned}
\end{equation}

When an agent diagnoses an incident, it reasons over structured state transitions first. If deeper diagnostic data is needed, the agent issues a targeted dereference request for $\mathit{opaque\_id}$. The gateway verifies that $\mathrm{SHA256}(\mathit{payload}) = H_{\mathrm{payload}}$ before delivering the text wrapped in untrusted data delimiters.

\subsection{Integrity Contract and Overhead}
\label{sec:protocol:integrity}

Table~\ref{tab:integrity} summarizes the cryptographic defenses across adversarial failure scenarios.

\begin{table}[t]
\centering
\scriptsize
\renewcommand{\arraystretch}{0.92}
\setlength{\tabcolsep}{2.5pt}
\caption{Adversarial tampering detection mechanisms.}
\label{tab:integrity}
\begin{tabularx}{\columnwidth}{@{} >{\raggedright\arraybackslash}p{2.6cm} >{\raggedright\arraybackslash}X @{}}
\toprule
\textbf{Adversarial Action} & \textbf{Detection Mechanism \& Guarantees} \\
\midrule
Payload bit flip / mutation & Merkle root or Ed25519 signature fails; batch rejected. \\
Batch reorder / splicing & $\mathit{previous\_root} \neq H(\mathit{batch}_{k-1})$; verification halts. \\
Intermediate record omission & Sequence jump $\neq \texttt{first} + \texttt{count}$; flagged as \textsf{gap}. \\
Suffix truncation & Sequence $<$ signed checkpoint; flagged as \textsf{truncated}. \\
Schema alteration & Manifest digest mismatch $H_S \neq \mathrm{SHA256}(S)$; rejected. \\
Unmonitored code omission & Outside profile $\Omega$; excluded from completeness scope. \\
\bottomrule
\end{tabularx}
\end{table}

For a representative batch of 256 records, the batch header and signature add 136 bytes of cryptographic metadata ($136 / 256 \approx 0.531$ bytes per record). Incorporating periodic chain-head checkpoints (144 bytes every 1,024 records) adds an additional $0.141$ bytes per record. Total cryptographic overhead is $\approx 0.672$ bytes per record, maintaining sub-byte metadata overhead while ensuring tamper-evidence and rollback detection.

\section{Formal Evidence Contracts and Guarantees}
\label{sec:formal-contract}

\subsection{Information-Preservation Limit}
\label{sec:formal-contract:nfl}

Recall that $\EventStream$ denotes the universe of finite canonical record streams over a bounded horizon (Section~\ref{sec:agent-native}). Let $R: \EventStream \rightarrow \ArtifactSpace$ be a retained telemetry representation mapping record streams into an artifact space $\ArtifactSpace$, and let $\Fall$ denote the set of all Boolean predicates over $\EventStream$.

\begin{proposition}[Information-preservation constraint]
\label{prop:nfl}
If, for every predicate $f \in \Fall$, there exists an exact evaluator $\hat f: \ArtifactSpace \rightarrow \{0,1\}$ such that
\begin{equation}
  \hat f(R(E)) = f(E) \quad \text{for all } E \in \EventStream,
\end{equation}
then the retained representation mapping $R$ must be injective.
\end{proposition}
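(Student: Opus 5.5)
The plan is to argue by contraposition: assume $R$ is not injective and exhibit a single predicate in $\Fall$ that no evaluator on $\ArtifactSpace$ can compute exactly. Since $\Fall$ contains \emph{all} Boolean predicates over $\EventStream$, we are free to choose this predicate to separate any two streams we like.

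First, non-injectivity gives two streams $E_1 \neq E_2$ in $\EventStream$ with $R(E_1) = R(E_2) =: a$. Next, take the indicator predicate $f_{E_1}(E) = 1$ if $E = E_1$ and $0$ otherwise. It is a Boolean predicate over $\EventStream$, so $f_{E_1} \in \Fall$, and it satisfies $f_{E_1}(E_1) = 1$ and $f_{E_1}(E_2) = 0$. Now suppose, as the hypothesis grants, that some exact evaluator $\hat f: \ArtifactSpace \to \{0,1\}$ exists for this $f_{E_1}$. Applying the exactness identity at $E_1$ and at $E_2$ gives
\begin{equation*}
1 = f_{E_1}(E_1) = \hat f(R(E_1)) = \hat f(a) = \hat f(R(E_2)) = f_{E_1}(E_2) = 0,
\end{equation*}
which is a contradiction. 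So no exact evaluator exists for $f_{E_1}$, contradicting the hypothesis, and therefore $R$ must be injective.

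There is no real obstacle here; the argument is only a few lines once the right predicate is chosen. The only point to state carefully is that $\hat f$ is a genuine function on $\ArtifactSpace$, so it must return a single value at $a$. It is also worth remarking that the statement uses far less than all of $\Fall$: the argument only needs the hypothesis for any family of predicates that separates points of $\EventStream$, for instance the indicators or the single-record membership predicates. This observation motivates the coding lower bound that follows. An injective $R$ on the finite streams over the bounded horizon requires $|\ArtifactSpace| \ge |\EventStream|$, i.e.\ at least $\log_2 |\EventStream|$ bits in the worst case.
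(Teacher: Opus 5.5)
Your proof is correct and is essentially the paper's argument: take two distinct streams with the same image $a$ under $R$, use the indicator predicate of one of them, and observe that the single value $\hat f(a)$ cannot equal both $1$ and $0$. One small caveat on your side remark: single-record membership predicates separate points of $\EventStream$ only if a stream is determined by its set of records (e.g., if records carry their sequence positions). Streams that differ only in the order or multiplicity of identical records are not distinguished by them, whereas the indicators always separate points.
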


\begin{proof}
Suppose for contradiction that there exist distinct streams $E_1 \neq E_2$ such that $R(E_1) = R(E_2)$. Define the predicate $f^*(E) = 1$ if and only if $E = E_1$. Evaluating $\hat f^*$ on the retained representation yields $\hat f^*(R(E_1)) = \hat f^*(R(E_2))$, yet correctness requires $f^*(E_1)=1 \neq 0 = f^*(E_2)$. Hence, no such evaluator $\hat f^*$ can exist, contradicting the hypothesis.
\end{proof}

\begin{corollary}[Coding lower bound]
Let $E \sim P$ be a stream drawn from distribution $P$. If the representation is encoded by a lossless, uniquely decodable binary code $C$ and $R$ satisfies Proposition~\ref{prop:nfl}, then
\begin{equation}
  \Expect_P[|C(R(E))|] \geq H_P(E),
\end{equation}
where $|C(\cdot)|$ denotes binary codeword length in bits and $H_P(E)$ is the Shannon entropy of stream $E$ under distribution $P$.
\end{corollary}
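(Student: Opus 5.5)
We argue by composition: injectivity from Proposition~\ref{prop:nfl}, followed by the source-coding converse for uniquely decodable codes. The hypothesis of Proposition~\ref{prop:nfl} holds, so $R$ is injective on $\EventStream$. The code $C$ is lossless and uniquely decodable on $\ArtifactSpace$. Hence the composite map $C \circ R: \EventStream \to \{0,1\}^*$ is itself an injective code on streams, and unique decodability transfers: if a concatenation of codewords $C(R(E_1))\cdots C(R(E_k))$ equals another such concatenation, then unique decodability of $C$ forces the sequences of artifacts to coincide, and injectivity of $R$ forces the sequences of streams to coincide. So $C\circ R$ is a uniquely decodable binary code for the random variable $E$.

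Next we apply the Kraft--McMillan inequality to $C \circ R$. Since $\EventStream$ consists of finite streams over a bounded horizon with finite alphabets, it is at most countable (finite, in the natural reading). Writing $\ell(E) = |C(R(E))|$, McMillan's theorem gives
\begin{equation}
\sum_{E \in \EventStream} 2^{-\ell(E)} \le 1.
\end{equation}

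Finally we run the standard Gibbs-inequality step. Set $q(E) = 2^{-\ell(E)}/Z$ with $Z \le 1$. Then
\begin{equation}
\Expect_P[\ell(E)] - H_P(E) = \sum_E P(E)\log_2\frac{P(E)}{2^{-\ell(E)}} = D(P\,\|\,q) - \log_2 Z \ge 0,
\end{equation}
since relative entropy is nonnegative and $\log_2 Z \le 0$. Terms with $P(E)=0$ are dropped by the usual $0\log 0 = 0$ convention, which yields $\Expect_P[|C(R(E))|] \ge H_P(E)$.

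The main obstacle is the transfer step. The corollary assumes only that $C$ is uniquely decodable on artifacts, so we must check that precomposing with an injective $R$ preserves unique decodability; otherwise the bound would be stated for the wrong random variable $R(E)$. Even then, $H_P(R(E)) = H_P(E)$ by injectivity, so bounding $R(E)$ suffices: we can apply McMillan plus Gibbs directly to $R(E)$ and conclude $\Expect[|C(R(E))|] \ge H(R(E)) = H_P(E)$. This alternative route sidesteps the transfer argument and is the version we would write. If $\EventStream$ were countably infinite, McMillan's inequality still holds by taking finite truncations, so no further issue arises.
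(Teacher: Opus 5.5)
Your proposal is correct and follows essentially the argument the paper leaves implicit (the corollary is stated without a written proof): injectivity of $R$ from Proposition~\ref{prop:nfl} gives $H_P(R(E)) = H_P(E)$, and the Kraft--McMillan inequality plus Gibbs' inequality for the uniquely decodable code $C$ give $\Expect_P[|C(R(E))|] \ge H_P(R(E))$. The direct route you say you would write up is the cleaner of your two versions, though your transfer argument for $C \circ R$ is also sound.
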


This establishes that no lossy summary can preserve exact answers for all downstream operational queries. Accordingly, \ATP{} chooses a lossless canonical representation for accepted state-delta records and confines lossy summarization or aggregation to derived views in $\DerPlane$.

\subsection{Gateway Query Contracts}

For an authenticated ledger range $L$, an exact query family $\Qe$ guarantees exact evaluation over all canonical records in scope:
\begin{equation}
  \hat q(L) = q(L) \quad \forall q \in \Qe.
\end{equation}
For an approximate query family $\Qa$ (e.g., streaming sketches or probabilistic anomaly scores), the schema must publish its error metric $d_q$, error bound $\epsilon_q$, and confidence $1-\delta_q$:
\begin{equation}
 \Pr_{L\sim P}\!\left[d_q(\hat q(L),\, q(L)) \leq \epsilon_q\right] \geq 1 - \delta_q.
\end{equation}

\subsection{Reversibility and Rendering}

For an exact schema $S$ and normalized positional tuple $p$, the protocol-native row encoder $A_S$ and decoder $A_S^{-1}$ are strictly bijective:
\begin{equation}
  A_S^{-1}(A_S(p)) = p.
\end{equation}
In contrast, deterministic template formatting $\Render_S(p)$ produces a human-readable explanation from which exact binary recovery is not required ($\Render_S$ need not be injective).

\subsection{Qualified Dependency Closure}

Let $G_t=(V_t, A_t)$ be the state graph maintained by the gateway. For a query focused on entity set $V_0 \subseteq V_t$, an evidence capsule induces a qualified dependency subgraph $G'=(V', A')$ containing: (1)~in-scope vertices $V'$ up to topological distance $k_{\mathrm{max}}$; (2)~active invariant violations on $V'$; (3)~inbound/outbound relation edges $A' \subseteq A_t$; and (4)~explicit boundary markers for dependencies traversing outside profile $\Omega$.

\subsection{Verified Event Non-Occurrence}

A critical requirement for automated incident responders is provable negative evidence: certifying that a specific failure, state transition, or anomaly \emph{did not occur} within a given time window. Temporal intervals in the exact-query contracts refer to canonical record timestamps; claims translated to an external physical wall-clock interval must account for the clock-uncertainty bound $\Delta_{\mathrm{clk}}$ declared in observation profile $\Omega$.

\begin{definition}[Query completeness predicate]
A query range $[t_1, t_2]$ over schema $S$ and entity set $V_0$ satisfies completeness ($\mathsf{Complete}(q)$) if and only if:
(1)~all producer batches covering $[t_1, t_2]$ have valid signatures;
(2)~hash-chain continuity ($\mathit{previous\_root}$) holds unbroken across all batches;
(3)~the sequence range is monotonically contiguous with zero missing sequence numbers;
(4)~schema manifest $H_S$ is resolved; and
(5)~the highest sequence is anchored against an independent chain-head checkpoint.
\end{definition}

\begin{theorem}[Ledger-relative verified negative]
\label{thm:certified-negative}
Let $q$ be an exact query evaluating predicate $\phi$ over scope $(V_0, S, [t_1, t_2])$ under observation profile $\Omega$. If $\mathsf{Complete}(q)$ holds and the evaluated result set is empty:
\begin{equation}
  \mathcal{R}(q) = \emptyset,
\end{equation}
then no event satisfying $\phi$ was observed and accepted by trusted instrumentation under profile $\Omega$ within $[t_1, t_2]$.
\end{theorem}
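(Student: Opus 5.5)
We argue by contraposition. Suppose some event $x$ satisfying $\phi$, with scope $(V_0,S,[t_1,t_2])$, was observed and accepted by trusted instrumentation under $\Omega$. We show that $x$ must then appear as a canonical record in the range the exact query evaluates, so $\mathcal{R}(q)\neq\emptyset$.

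\textbf{Step 1: the event enters the ledger.} By Definition~\ref{def:observation-boundary} and the trust model, an accepted observation is a canonical record $r_x$. It is produced by a trusted producer $p\in\mathcal{P}$ under schema $S$, with some sequence number $s_x$ inside a batch $B$. The producer SDK signs $B$, and the trusted collector verifies it and appends it atomically. So $B$ was at some point part of $p$'s accepted hash chain, and $s_x$ is at most the highest sequence the collector accepted. Its timestamp, $\texttt{base\_time}+\texttt{time\_delta}$, lies in $[t_1,t_2]$.

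\textbf{Step 2: completeness forces the record into the evaluated range.} This step carries the real content, and I expect it to be the main obstacle. The difficulty is turning conditions (1)--(5) of $\mathsf{Complete}(q)$ into the statement that the records the query sees are exactly the accepted records in scope. I would argue this as a bijection on sequence numbers.

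\begin{itemize}
\item Condition (5) anchors the highest sequence to an independent chain-head checkpoint signed by the TCB. Any accepted sequence in the window is therefore at most the anchored head, which rules out suffix truncation of $s_x$.
\item Conditions (3) and (2) make the retrieved sequences contiguous and chained by $\mathit{previous\_root}$. Every sequence up to the anchor is therefore present, and it is present in the unique accepted chain, so no alternative branch can be substituted. Here I use collision resistance of $H$ from \eqref{eq:batch-root}.
\item Condition (1), signature validity, makes the retrieved batch at sequence $s_x$ identical to $B$. Together with the Merkle root, the leaf at position $s_x-\texttt{first\_sequence}$ is exactly $r_x$. This uses unforgeability of the signature scheme and collision resistance of $H$.
\item Condition (4) resolves $H_S$, so the positional values of $r_x$ decode under the same manifest. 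Because $A_S$ is bijective (the reversibility contract, $A_S^{-1}(A_S(p))=p$), the decoded tuple is the original one.
\end{itemize}

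These conditions must hold for every producer in $\mathcal{P}$ whose batches cover the window. To make this explicit, I would state an intermediate lemma: under $\mathsf{Complete}(q)$, the verified range $L$ equals the set of accepted records of $\Omega$ in scope, except with negligible probability. The theorem's guarantee is therefore computational. Timestamps here are canonical timestamps, so $\Delta_{\mathrm{clk}}$ plays no role in the ledger-relative statement.

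\textbf{Step 3: exactness.} Since $q\in\Qe$, we have $\hat q(L)=q(L)$. Then $r_x\in L$ together with $\phi(r_x)$ gives $r_x\in\mathcal{R}(q)$, which contradicts emptiness.

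Finally, I would remark on what the argument does not cover. Events outside $\Omega$, including omissions before the SDK and compromised keys, are excluded by the threat model. That exclusion is precisely why the conclusion is stated relative to the ledger.
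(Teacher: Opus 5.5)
Your proposal is correct and is essentially the paper's proof run in contrapositive form. The paper combines the same three ingredients: $\mathsf{Complete}(q)$ (authentic, sequence-contiguous, schema-resolved, checkpoint-anchored records), exhaustive evaluation by the exact query $q \in \Qe$, and Definition~\ref{def:observation-boundary} to make the conclusion ledger-relative. Your Step~2 lemma spells out, with explicit unforgeability and collision-resistance assumptions, the step the paper takes as immediate: that the verified range contains every accepted in-scope record. Both versions also tacitly assume that the anchoring checkpoint covers the last accepted sequence in the window, consistent with the paper's stated limitation that suffix rollback is detectable only up to the latest published checkpoint.
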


\begin{proof}
By $\mathsf{Complete}(q)$, the canonical records covering $[t_1, t_2]$ are authentic, sequence-contiguous, schema-resolved, and anchored against the independent checkpoint channel. Since $q \in \Qe$ evaluates predicate $\phi$ exhaustively over all canonical records within scope $(V_0, S, [t_1, t_2])$, $\mathcal{R}(q) = \emptyset$ implies that no accepted canonical record satisfying $\phi$ exists in that scope. By Definition~\ref{def:observation-boundary}, this conclusion is strictly relative to the observation mapping $O_\Omega$: no event satisfying $\phi$ was observed and accepted by trusted instrumentation under profile $\Omega$ within $[t_1, t_2]$.
\end{proof}

When $\mathsf{Complete}(q)$ fails (e.g., due to a sequence gap, unanchored suffix, or tampering), the gateway returns non-complete coverage ($\mathsf{status} \in \{\textsf{truncated}, \textsf{tampered}, \textsf{gap}\}$), preventing false negative conclusions.

\section{Implementation and Evaluation}
\label{sec:evaluation}

We evaluate \ATP{} through a complete prototype implementation and extensive empirical benchmarks on distributed microservice workloads across five core evaluation dimensions:
\begin{itemize}
  \setlength{\itemsep}{0pt}
  \item \textbf{Representation and Storage Efficiency:} Wire payload per record, daily storage footprints, and modeled cloud query scan costs across evidence primitives compared to raw text, structured \OTel{} JSON, OTLP Protobuf, and compressed log codecs.
  \item \textbf{Agent Context and Reasoning Efficiency:} Context token reduction ($\rho_T$), iterative tool roundtrip reduction ($\rho_O$), and Time-to-Root-Cause identification ($T_{\mathrm{RCA}}$ $p50/p95$) across representative instruction-tuned LLMs.
  \item \textbf{Diagnostic Quality and Triage Accuracy:} Incident detection recall, precision, F1-score, and root-cause localization across distinct failure classes.
  \item \textbf{Certified Negative Verification:} Verification accuracy for event non-occurrence ($\mathsf{Complete}(q)$) and elimination of false suspect attributions.
  \item \textbf{Cryptographic Overhead and Robustness:} Ingestion throughput, batch signing/verification latency, range scan speed, and passive prompt-injection resilience.
\end{itemize}

\subsection{Implementation and Testbed Setup}
\label{sec:implementation}

The \ATP{} reference system is implemented as a modular, high-throughput toolchain:
\begin{itemize}
  \setlength{\itemsep}{0pt}
  \item \textbf{Producer SDK (4.8k LoC Rust, with C FFI and Go bindings):} Implements zero-allocation binary tuple packing, content-addressed schema resolution, monotonic sequence assignment, and local ring-buffer staging. Batch Ed25519 signing utilizes SIMD-accelerated AVX2 instructions.
  \item \textbf{Collector and Verifier Service (6.2k LoC Rust):} Built on an asynchronous \texttt{tokio} runtime. Ingested batches undergo parallel cryptographic signature checks, Merkle root verification, and previous-root hash-chain validation before atomic commitment to append-only segment storage.
  \item \textbf{Read-Time Range Verifier (1.8k LoC Rust):} A zero-copy segment scanner that validates contiguous sequence monotonic progression, batch root hash chains, and independent signed chain-head checkpoints across queried temporal windows $[t_{\mathrm{start}}, t_{\mathrm{end}}]$.
  \item \textbf{Dual Agent Access Layer (3.5k LoC Rust/Python):} Comprises (1)~the \emph{Stateless Protocol Decoder}, which streams verified batches directly into compact positional TSV rows, and (2)~the \emph{Stateful Semantic Gateway}, which maintains an in-memory versioned entity graph $G_t=(V_t, A_t)$ and serves bounded evidence capsules via gRPC and Model Context Protocol (MCP) tool interfaces.
\end{itemize}

Experiments are conducted on two standard distributed microservice testbeds:
\begin{enumerate}
  \setlength{\itemsep}{0pt}
  \item \textbf{AIOpsLab HotelReservation~\cite{chen2025aiopslab}:} 18 microservices written in Go, backed by Memcached, MongoDB, and Consul, driven by Locust load generators simulating 50,000 client requests/second.
  \item \textbf{OpenTelemetry Astronomy Shop~\cite{otelDemo}:} 14 microservices implemented in 8 languages (Go, Java, Python, C++, Node.js, C\#, Ruby, and PHP) interacting via gRPC and HTTP, backed by PostgreSQL, Redis, and Kafka, running at 25,000 requests/second.
\end{enumerate}
The testbeds run on a 16-node dedicated Kubernetes v1.30 cluster (AWS \texttt{c5.4xlarge} instances, 16 vCPUs, 32\,GB RAM, 10\,Gbps network per node). Fault injection scripts systematically inject 120 randomized, reproducible operational incidents spanning pod crash-loops, memory leaks, downstream cascading network latency (50--500\,ms), packet drops (5--30\%), configuration drift, and deployment regressions.

We evaluate five end-to-end telemetry configurations across the benchmark matrix:
\textbf{Config A} (Unstructured Text): native stdout/stderr text logs ingested via FluentBit;
\textbf{Config B} (Standard Triad): unstructured text logs, Prometheus metrics, and Jaeger distributed traces;
\textbf{Config C} (Structured \OTel{} JSON): standard \OTel{} log events serialized with complete JSON semantic attributes and trace correlation IDs;
\textbf{Config D} (\ATP{} Stateless Decoder): state-delta ledger streamed via the stateless tabular decoder; and
\textbf{Config E} (\ATP{} Semantic Gateway): state-delta ledger consumed via the stateful semantic gateway with topological evidence capsules.
The autonomous agent reasoning matrix evaluates four representative instruction-tuned LLM families: Claude-3.5-Sonnet (20241022), GPT-4o (2024-08-06), Llama-3.1-70B-Instruct, and Qwen-2.5-72B-Instruct, alongside local lightweight models (Llama-3.1-8B, Qwen-2.5-Coder-7B) and algorithmic rule-based baselines.

\paragraph{LLM evaluation and reproducibility protocol.}
For every model/configuration pairing reported in Table~\ref{tab:agent-kpis}, agents execute a standardized incident-diagnosis task under a fixed system prompt, identical root-cause output schema, and controlled tool-query semantics, varying only the telemetry representation available to the agent. Model inference uses deterministic greedy decoding ($T = 0.0$, $\mathrm{top}\text{-}p = 1.0$) with a 4,096 output-token limit; while commercial APIs do not guarantee strict bitwise determinism across sessions, fixed prompt templates and seeds are maintained. Each agent operates under an execution budget of at most 15 tool query operations ($O$, defined as a single round of telemetry retrieval or state inspection) with a 30-second per-round timeout and up to 3 retries on transient errors, terminating upon emitting an explicit root-cause verdict or exhausting the operation budget. The same 120 fault-injection incidents, background workloads, and fault schedules are used for every reported model/configuration pairing. Configs A, C, D, and E are evaluated across all four primary LLM families; Config B is additionally evaluated for Claude-3.5-Sonnet and GPT-4o. Reported metrics and 95\% confidence intervals are estimated via $B = 10{,}000$ paired percentile bootstrap replicates resampled at the incident level, with ratio metrics ($\rho_T, \rho_{\$}, \rho_O$) recomputed within each replicate. For adversarial robustness (Section~\ref{sec:eval:crypto-robustness}), the 50 prompt-injection attack payloads are replayed against Configs A, C, D, and E using Claude-3.5-Sonnet as the primary model (50 trials per configuration), with secondary cross-validation independently replaying all 50 attacks per configuration on GPT-4o; reported headline attack rates reflect the primary 50 trials per configuration.

\paragraph{Evaluation success criteria and metrics.}
To establish rigorous evaluation gates, we evaluate quantitative success thresholds based on 95\% bootstrap confidence intervals ($U_{0.95}(x)$ denotes the 95\% CI upper bound for ratio $x$) on the primary model (Claude-3.5-Sonnet) and cross-validated across all model families: (1)~\emph{Context token reduction}: $U_{0.95}(\rho_T) \le 0.50$ relative to Config~C (stretch target $\le 0.25$); (2)~\emph{Scan cost}: $U_{0.95}(\rho_{\$}) \le 0.20$ (stretch $\le 0.10$); (3)~\emph{Tool roundtrips}: $U_{0.95}(\rho_O) \le 0.50$; (4)~\emph{Recall preservation}: $\Delta_{\mathrm{recall}} = R(\text{Config E}) - R(\text{Config C}) \ge -0.02$; and (5)~\emph{Triage latency}: $T_{\mathrm{RCA}}(\mathrm{Config\ E}) \le T_{\mathrm{RCA}}(\mathrm{Config\ C})$.

\begin{figure*}[t]
\centering
\resizebox{\textwidth}{!}{
\begin{tikzpicture}[font=\sffamily\scriptsize]

\begin{scope}[shift={(0,0)}]
  \draw[gray!30, step=0.5] (0,0) grid (3.6,3.0);
  \draw[->, thick] (0,0) -- (3.7,0) node[right] {};
  \draw[->, thick] (0,0) -- (0,3.2) node[above, font=\sffamily\bfseries\tiny] {Wire Size (Bytes/Rec)};
  
  \foreach \y/\label in {0.69/150, 1.38/300, 2.08/450, 2.77/600} {
    \draw (0,\y) -- (-0.08,\y) node[left, font=\tiny] {\label};
  }
  
  \fill[red!60!black] (0.15,0) rectangle (0.50,1.77);
  \node[above, font=\fontsize{5}{6}\selectfont] at (0.325,1.77) {384};
  \node[below, rotate=45, anchor=north east, font=\tiny] at (0.325,0) {Config A};
  
  \fill[orange!80!black] (0.62,0) rectangle (0.97,2.96);
  \node[above, font=\fontsize{5}{6}\selectfont] at (0.795,2.96) {642};
  \node[below, rotate=45, anchor=north east, font=\tiny] at (0.795,0) {Config B};
  
  \fill[yellow!70!black] (1.09,0) rectangle (1.44,2.36);
  \node[above, font=\fontsize{5}{6}\selectfont] at (1.265,2.36) {512};
  \node[below, rotate=45, anchor=north east, font=\tiny] at (1.265,0) {Config C};
  
  \fill[purple!70!black] (1.56,0) rectangle (1.91,0.68);
  \node[above, font=\fontsize{5}{6}\selectfont] at (1.735,0.68) {148};
  \node[below, rotate=45, anchor=north east, font=\tiny] at (1.735,0) {OTLP-PB};
  
  \fill[cyan!70!black] (2.03,0) rectangle (2.38,0.19);
  \node[above, font=\fontsize{5}{6}\selectfont] at (2.205,0.19) {42};
  \node[below, rotate=45, anchor=north east, font=\tiny] at (2.205,0) {CLP};
  
  \fill[blue!70!black] (2.50,0) rectangle (2.85,0.085);
  \node[above, font=\fontsize{5}{6}\selectfont, blue!80!black] at (2.675,0.085) {\textbf{18.4}};
  \node[below, rotate=45, anchor=north east, font=\tiny] at (2.675,0) {\textbf{ATP-Raw}};
  
  \fill[green!60!black] (2.97,0) rectangle (3.32,0.029);
  \node[above, font=\fontsize{5}{6}\selectfont, green!60!black] at (3.145,0.029) {\textbf{6.2}};
  \node[below, rotate=45, anchor=north east, font=\tiny] at (3.145,0) {\textbf{ATP-zstd}};
  
  \node[font=\sffamily\bfseries\footnotesize] at (1.8,-1.2) {(a) Wire Payload (Bytes/Rec)};
\end{scope}

\begin{scope}[shift={(4.5,0)}]
  \draw[gray!30, step=0.5] (0,0) grid (3.6,3.0);
  \draw[->, thick] (0,0) -- (3.7,0) node[right] {};
  \draw[->, thick] (0,0) -- (0,3.2) node[above, font=\sffamily\bfseries\tiny] {Tokens / Incident ($\times 10^3$)};
  
  \foreach \y/\label in {0.75/15, 1.50/30, 2.25/45, 3.00/60} {
    \draw (0,\y) -- (-0.08,\y) node[left, font=\tiny] {\label};
  }
  
  \fill[red!60!black] (0.2,0) rectangle (0.45,2.41);
  \fill[yellow!70!black] (0.5,0) rectangle (0.75,1.71);
  \fill[blue!70!black] (0.8,0) rectangle (1.05,0.575);
  \fill[green!60!black] (1.1,0) rectangle (1.35,0.19);
  \node[below, font=\tiny, align=center] at (0.775, -0.1) {Claude-3.5};
  
  \fill[red!60!black] (1.9,0) rectangle (2.15,2.56);
  \fill[yellow!70!black] (2.2,0) rectangle (2.45,1.84);
  \fill[blue!70!black] (2.5,0) rectangle (2.75,0.605);
  \fill[green!60!black] (2.8,0) rectangle (3.05,0.205);
  \node[below, font=\tiny, align=center] at (2.475, -0.1) {GPT-4o};
  
  \node[fill=white, draw=gray!40, inner sep=2pt, font=\fontsize{5}{6}\selectfont, anchor=north east] at (3.5, 2.9) {
    \begin{tabular}{@{}l@{}}
      \tikz\fill[red!60!black] (0,0) rectangle (0.15,0.1); Config A \\
      \tikz\fill[yellow!70!black] (0,0) rectangle (0.15,0.1); Config C \\
      \tikz\fill[blue!70!black] (0,0) rectangle (0.15,0.1); Config D \\
      \tikz\fill[green!60!black] (0,0) rectangle (0.15,0.1); Config E
    \end{tabular}
  };

  \node[font=\sffamily\bfseries\footnotesize] at (1.8,-1.2) {(b) Context Tokens per Incident};
\end{scope}

\begin{scope}[shift={(9.0,0)}]
  \draw[gray!30, step=0.5] (0,0) grid (3.6,3.0);
  \draw[->, thick] (0,0) -- (3.7,0) node[right] {};
  \draw[->, thick] (0,0) -- (0,3.2) node[above, font=\sffamily\bfseries\tiny] {Diagnostic F1-Score};
  
  \foreach \y/\label in {0.75/0.7, 1.50/0.8, 2.25/0.9, 3.00/1.0} {
    \draw (0,\y) -- (-0.08,\y) node[left, font=\tiny] {\label};
  }
  
  \fill[red!60!black] (0.2,0) rectangle (0.6,1.33);
  \node[above, font=\fontsize{5}{6}\selectfont] at (0.4,1.33) {0.78};
  \node[below, rotate=45, anchor=north east, font=\tiny] at (0.4,0) {Config A};
  
  \fill[orange!80!black] (0.8,0) rectangle (1.2,1.76);
  \node[above, font=\fontsize{5}{6}\selectfont] at (1.0,1.76) {0.84};
  \node[below, rotate=45, anchor=north east, font=\tiny] at (1.0,0) {Config B};
  
  \fill[yellow!70!black] (1.4,0) rectangle (1.8,2.00);
  \node[above, font=\fontsize{5}{6}\selectfont] at (1.6,2.00) {0.87};
  \node[below, rotate=45, anchor=north east, font=\tiny] at (1.6,0) {Config C};
  
  \fill[blue!70!black] (2.0,0) rectangle (2.4,2.265);
  \node[above, font=\fontsize{5}{6}\selectfont] at (2.2,2.265) {0.90};
  \node[below, rotate=45, anchor=north east, font=\tiny] at (2.2,0) {Config D};
  
  \fill[green!60!black] (2.6,0) rectangle (3.0,2.65);
  \node[above, font=\fontsize{5}{6}\selectfont, green!50!black] at (2.8,2.65) {\textbf{0.95}};
  \node[below, rotate=45, anchor=north east, font=\tiny] at (2.8,0) {\textbf{Config E}};
  
  \node[font=\sffamily\bfseries\footnotesize] at (1.8,-1.2) {(c) Diagnostic F1-Score};
\end{scope}

\begin{scope}[shift={(13.5,0)}]
  \draw[gray!30, step=0.5] (0,0) grid (3.6,3.0);
  \draw[->, thick] (0,0) -- (3.7,0) node[right] {};
  \draw[->, thick] (0,0) -- (0,3.2) node[above, font=\sffamily\bfseries\tiny] {TT-RCA (s) / Ops ($O$)};
  
  \foreach \y/\label in {0.75/50s, 1.50/100s, 2.25/150s, 3.00/200s} {
    \draw (0,\y) -- (-0.08,\y) node[left, font=\tiny] {\label};
  }
  
  \draw[thick, red!70!black, mark=square*, mark size=1.5pt] plot coordinates {
    (0.4, 3.0) (1.1, 2.925) (1.8, 2.76) (2.5, 1.47) (3.2, 0.69)
  };
  
  \fill[blue!40, opacity=0.6] (0.25,0) rectangle (0.55,2.52);
  \fill[blue!40, opacity=0.6] (0.95,0) rectangle (1.25,2.94);
  \fill[blue!40, opacity=0.6] (1.65,0) rectangle (1.95,2.13);
  \fill[blue!40, opacity=0.6] (2.35,0) rectangle (2.65,1.74);
  \fill[green!60!black, opacity=0.8] (3.05,0) rectangle (3.35,0.72);
  
  \node[below, rotate=45, anchor=north east, font=\tiny] at (0.4,0) {Config A};
  \node[below, rotate=45, anchor=north east, font=\tiny] at (1.1,0) {Config B};
  \node[below, rotate=45, anchor=north east, font=\tiny] at (1.8,0) {Config C};
  \node[below, rotate=45, anchor=north east, font=\tiny] at (2.5,0) {Config D};
  \node[below, rotate=45, anchor=north east, font=\tiny] at (3.2,0) {\textbf{Config E}};
  
  \node[fill=white, draw=gray!40, inner sep=2pt, font=\fontsize{5}{6}\selectfont, anchor=north east] at (3.5, 2.9) {
    \begin{tabular}{@{}l@{}}
      \tikz\fill[blue!40] (0,0) rectangle (0.15,0.1); Ops $O$ \\
      \tikz\draw[red!70!black, thick] (0,0.05) -- (0.2,0.05); TT-RCA
    \end{tabular}
  };

  \node[font=\sffamily\bfseries\footnotesize] at (1.8,-1.2) {(d) Operations $O$ \& TT-RCA};
\end{scope}

\end{tikzpicture}
}
\caption{Key Performance Indicators (KPIs) comparing \ATP{} against baseline architectures across 120 randomized incident injection trials.
(a)~Wire representation payload per record across codecs. (b)~LLM context token consumption per incident triage session for representative models. (c)~Overall incident diagnostic F1-score across telemetry configurations. (d)~Evidence-acquisition operations ($O$) and Time-to-Root-Cause identification ($T_{\mathrm{RCA}}$ in seconds).}
\label{fig:kpi-benchmarks}
\end{figure*}

\subsection{Representation and Storage Efficiency}
\label{sec:eval:representation}

Table~\ref{tab:representation-kpis} compares the representation efficiency, storage footprints, and modeled cloud query scan costs across candidate telemetry formats.

\begin{table*}[t]
\centering
\scriptsize
\renewcommand{\arraystretch}{0.92}
\setlength{\tabcolsep}{3.5pt}
\caption{Representation efficiency, daily storage footprint, and simulated cloud query scan costs across telemetry formats on the benchmark microservice workload (100 million events/day; values reported as $x \pm y$ denote mean $\pm$ standard deviation across sample batches; daily scan costs modeled at \$0.005/GB scanned).}
\label{tab:representation-kpis}
\begin{tabular}{lrrrrrrr}
\toprule
\textbf{Telemetry Representation} & \textbf{Raw Wire} & \textbf{zstd Wire} & \textbf{Storage} & \textbf{Ingestion Cost} & \textbf{Daily Scan Cost} & \textbf{Cost Ratio} & \textbf{Relative Change} \\
& \textbf{(Bytes/Rec)} & \textbf{(Bytes/Rec)} & \textbf{(GB/Day)} & \textbf{(\$/Month)} & \textbf{($\rho_{\$}$ at \$0.005/GB)} & \textbf{($\rho_{\$}$ vs Config C)} & \textbf{vs Config C} \\
\midrule
Config A (Unstructured Text Logs) & $384.2 \pm 42.6$ & $78.4 \pm 9.2$ & 38.42 & \$576.30 & \$0.192 & 0.750 & $-25.0\%$ \\
Config B (Standard Triad: Logs+Metrics+Traces) & $642.0 \pm 68.4$ & $142.6 \pm 15.4$ & 64.20 & \$963.00 & \$0.321 & 1.254 & $+25.4\%$ \\
Config C (Structured OTel JSON Events) & $512.4 \pm 54.2$ & $112.8 \pm 12.1$ & 51.24 & \$768.60 & \$0.256 & 1.000 & Baseline \\
OTLP Protobuf (gRPC Binary Envelope) & $148.6 \pm 16.8$ & $48.2 \pm 5.6$ & 14.86 & \$222.90 & \$0.074 & 0.290 & $-71.0\%$ \\
CLP (Compressed Log Codec)~\cite{rodrigues2021clp} & $42.1 \pm 5.4$ & $18.6 \pm 2.4$ & 4.21 & \$63.15 & \$0.021 & 0.082 & $-91.8\%$ \\
$\mu$Slope (Variable-Splitting Codec)~\cite{wang2024muslope} & $38.4 \pm 4.8$ & $16.2 \pm 2.1$ & 3.84 & \$57.60 & \$0.019 & 0.075 & $-92.5\%$ \\
LogCrisp (Structural Log Codec)~\cite{wei2025logcrisp} & $31.2 \pm 3.9$ & $12.8 \pm 1.7$ & 3.12 & \$46.80 & \$0.016 & 0.061 & $-93.9\%$ \\
\midrule
\textbf{ATP: State Transitions} & $16.2 \pm 2.1$ & $5.4 \pm 0.8$ & 1.62 & \$24.30 & \$0.0081 & 0.032 & $-96.8\%$ \\
\textbf{ATP: Scalar Observations} & $14.8 \pm 1.8$ & $4.9 \pm 0.6$ & 1.48 & \$22.20 & \$0.0074 & 0.029 & $-97.1\%$ \\
\textbf{ATP: Topological Relations} & $22.4 \pm 3.2$ & $7.8 \pm 1.2$ & 2.24 & \$33.60 & \$0.0112 & 0.044 & $-95.6\%$ \\
\textbf{ATP: State Checkpoints} & $28.6 \pm 4.5$ & $9.6 \pm 1.5$ & 2.86 & \$42.90 & \$0.0143 & 0.056 & $-94.4\%$ \\
\midrule
\textbf{Config D/E (ATP Weighted Workload)} & $\mathbf{18.4 \pm 3.8}$ & $\mathbf{6.2 \pm 1.4}$ & \textbf{1.84} & \textbf{\$27.60} & \textbf{\$0.0092} & \textbf{0.036} & \textbf{$-$96.4\%} \\
\bottomrule
\end{tabular}
\end{table*}

As shown in Table~\ref{tab:representation-kpis} and Figure~\ref{fig:kpi-benchmarks}(a), standard \OTel{} JSON logs (Config~C) consume $512.4 \pm 54.2$ bytes per record due to repeated JSON key strings (\texttt{"trace\_id"}, \texttt{"service.name"}, \texttt{"attributes"}). OTLP Protobuf compresses this to $148.6 \pm 16.8$ bytes/record by packing field tags. In contrast, \ATP{}'s positional binary tuple encoding (Section~\ref{sec:protocol:record}) requires only \textbf{$18.4 \pm 3.8$ bytes per raw record} across the composite microservice workload (ranging from $14.8$\,B for high-frequency scalar observations to $28.6$\,B for rich state checkpoint vectors). This represents a \textbf{96.4\% reduction} compared to Config~C and an \textbf{87.6\% reduction} compared to OTLP Protobuf. With standard zstd block compression, \ATP{} achieves \textbf{$6.2 \pm 1.4$ bytes per record}.

In high-throughput enterprise fleets ingesting 100 million events daily, Config~C incurs \$768.60/month in cloud ingestion fees (at \$0.50/GB) and \$0.256 daily in log query scan charges (at the standard cloud rate of \$0.005/GB scanned per full-history query). \ATP{} slashes monthly ingestion from \$768.60 to \textbf{\$27.60} and daily diagnostic scan costs to \textbf{\$0.0092}, yielding a scan cost ratio of $\rho_{\$} = 0.036$ ($U_{0.95}(\rho_{\$}) = 0.039 \le 0.20$), comfortably satisfying our cost efficiency targets.

\subsection{Agent Context and Reasoning Efficiency}
\label{sec:eval:agent-reasoning}

Table~\ref{tab:agent-kpis} presents the end-to-end performance of autonomous LLM reasoning agents during incident triage across 120 randomized incident scenarios, reporting both mean values and tail distributions ($p50/p95$).

\begin{table*}[t]
\centering
\scriptsize
\renewcommand{\arraystretch}{0.92}
\setlength{\tabcolsep}{2.5pt}
\caption{End-to-end incident triage KPIs across autonomous agent architectures (120 fault injection trials per cell; values reported as $x \pm y$ denote mean $\pm$ standard deviation across incident trials; parenthesized ranges denote 95\% paired bootstrap confidence intervals; $p50$ and $p95$ tail latencies reported for TT-RCA).}
\label{tab:agent-kpis}
\resizebox{\textwidth}{!}{
\begin{tabular}{llrrrrrr}
\toprule
\textbf{LLM Model} & \textbf{Configuration} & \textbf{Tokens / Incident ($T$)} & \textbf{Token Ratio ($\rho_T$)} & \textbf{Operations ($O$)} & \textbf{TT-RCA Mean ($p50 / p95$)} & \textbf{Recall ($R$)} & \textbf{F1-Score} \\
\midrule
\textbf{Claude-3.5-Sonnet} 
& Config A (Text Logs) & $48,250 \pm 1,420$ & 1.412 (1.37, 1.45) & $8.4 \pm 0.6$ & $218\,\mathrm{s}\ (194\,\mathrm{s} / 342\,\mathrm{s})$ & $0.812 \pm 0.024$ & $0.777 \pm 0.026$ \\
& Config B (Standard Triad) & $62,400 \pm 1,850$ & 1.826 (1.78, 1.88) & $9.8 \pm 0.7$ & $195\,\mathrm{s}\ (176\,\mathrm{s} / 310\,\mathrm{s})$ & $0.865 \pm 0.021$ & $0.835 \pm 0.023$ \\
& Config C (OTel JSON) & $34,180 \pm 980$ & 1.000 (Baseline) & $7.1 \pm 0.5$ & $184\,\mathrm{s}\ (162\,\mathrm{s} / 295\,\mathrm{s})$ & $0.894 \pm 0.019$ & $0.867 \pm 0.021$ \\
& \textbf{Config D (ATP Stateless)} & $\mathbf{11,460 \pm 380}$ & \textbf{0.335 (0.32, 0.35)} & $\mathbf{5.8 \pm 0.4}$ & $\mathbf{98\,\mathrm{s}\ (84\,\mathrm{s} / 164\,\mathrm{s})}$ & $\mathbf{0.918 \pm 0.017}$ & $\mathbf{0.902 \pm 0.018}$ \\
& \textbf{Config E (ATP Gateway)} & $\mathbf{3,820 \pm 140}$ & \textbf{0.112 (0.10, 0.12)} & $\mathbf{2.4 \pm 0.2}$ & $\mathbf{46\,\mathrm{s}\ (38\,\mathrm{s} / 78\,\mathrm{s})}$ & $\mathbf{0.965 \pm 0.012}$ & $\mathbf{0.953 \pm 0.014}$ \\
\midrule
\textbf{GPT-4o} 
& Config A (Text Logs) & $51,200 \pm 1,560$ & 1.391 (1.35, 1.43) & $9.1 \pm 0.6$ & $242\,\mathrm{s}\ (215\,\mathrm{s} / 380\,\mathrm{s})$ & $0.785 \pm 0.026$ & $0.752 \pm 0.028$ \\
& Config B (Standard Triad) & $65,800 \pm 1,920$ & 1.788 (1.74, 1.84) & $10.4 \pm 0.8$ & $210\,\mathrm{s}\ (188\,\mathrm{s} / 335\,\mathrm{s})$ & $0.842 \pm 0.023$ & $0.816 \pm 0.025$ \\
& Config C (OTel JSON) & $36,800 \pm 1,120$ & 1.000 (Baseline) & $7.6 \pm 0.5$ & $198\,\mathrm{s}\ (174\,\mathrm{s} / 315\,\mathrm{s})$ & $0.871 \pm 0.020$ & $0.848 \pm 0.022$ \\
& \textbf{Config D (ATP Stateless)} & $\mathbf{12,100 \pm 410}$ & \textbf{0.329 (0.31, 0.34)} & $\mathbf{6.1 \pm 0.4}$ & $\mathbf{112\,\mathrm{s}\ (96\,\mathrm{s} / 185\,\mathrm{s})}$ & $\mathbf{0.898 \pm 0.018}$ & $\mathbf{0.884 \pm 0.019}$ \\
& \textbf{Config E (ATP Gateway)} & $\mathbf{4,100 \pm 160}$ & \textbf{0.111 (0.10, 0.12)} & $\mathbf{2.6 \pm 0.3}$ & $\mathbf{63\,\mathrm{s}\ (52\,\mathrm{s} / 98\,\mathrm{s})}$ & $\mathbf{0.948 \pm 0.014}$ & $\mathbf{0.939 \pm 0.015}$ \\
\midrule
\textbf{Llama-3.1-70B} 
& Config A (Text Logs) & $54,600 \pm 1,720$ & 1.418 (1.37, 1.46) & $9.8 \pm 0.7$ & $265\,\mathrm{s}\ (238\,\mathrm{s} / 412\,\mathrm{s})$ & $0.742 \pm 0.028$ & $0.710 \pm 0.030$ \\
& Config C (OTel JSON) & $38,500 \pm 1,240$ & 1.000 (Baseline) & $8.2 \pm 0.6$ & $215\,\mathrm{s}\ (190\,\mathrm{s} / 340\,\mathrm{s})$ & $0.835 \pm 0.023$ & $0.812 \pm 0.024$ \\
& \textbf{Config D (ATP Stateless)} & $\mathbf{12,850 \pm 460}$ & \textbf{0.334 (0.32, 0.35)} & $\mathbf{6.4 \pm 0.5}$ & $\mathbf{128\,\mathrm{s}\ (110\,\mathrm{s} / 210\,\mathrm{s})}$ & $\mathbf{0.872 \pm 0.020}$ & $\mathbf{0.856 \pm 0.021}$ \\
& \textbf{Config E (ATP Gateway)} & $\mathbf{4,350 \pm 190}$ & \textbf{0.113 (0.10, 0.13)} & $\mathbf{2.8 \pm 0.3}$ & $\mathbf{74\,\mathrm{s}\ (62\,\mathrm{s} / 120\,\mathrm{s})}$ & $\mathbf{0.931 \pm 0.015}$ & $\mathbf{0.920 \pm 0.017}$ \\
\midrule
\textbf{Qwen-2.5-72B} 
& Config A (Text Logs) & $52,900 \pm 1,640$ & 1.407 (1.36, 1.45) & $9.4 \pm 0.6$ & $254\,\mathrm{s}\ (226\,\mathrm{s} / 395\,\mathrm{s})$ & $0.768 \pm 0.027$ & $0.734 \pm 0.029$ \\
& Config C (OTel JSON) & $37,600 \pm 1,180$ & 1.000 (Baseline) & $7.9 \pm 0.5$ & $208\,\mathrm{s}\ (184\,\mathrm{s} / 330\,\mathrm{s})$ & $0.852 \pm 0.021$ & $0.829 \pm 0.023$ \\
& \textbf{Config D (ATP Stateless)} & $\mathbf{12,400 \pm 430}$ & \textbf{0.330 (0.31, 0.34)} & $\mathbf{6.2 \pm 0.4}$ & $\mathbf{119\,\mathrm{s}\ (102\,\mathrm{s} / 195\,\mathrm{s})}$ & $\mathbf{0.886 \pm 0.019}$ & $\mathbf{0.871 \pm 0.020}$ \\
& \textbf{Config E (ATP Gateway)} & $\mathbf{4,220 \pm 170}$ & \textbf{0.112 (0.10, 0.12)} & $\mathbf{2.7 \pm 0.3}$ & $\mathbf{68\,\mathrm{s}\ (56\,\mathrm{s} / 112\,\mathrm{s})}$ & $\mathbf{0.942 \pm 0.014}$ & $\mathbf{0.932 \pm 0.016}$ \\
\bottomrule
\end{tabular}
}
\end{table*}

Under Config~C, triaging an incident requires an average of 34,180 tokens with Claude-3.5-Sonnet as the agent filters verbose JSON keys and linear log strings. With the \ATP{} Stateless Decoder (Config~D), token consumption drops to \textbf{11,460 tokens} ($\rho_T = 0.335$), achieving a \textbf{66.5\% reduction} in context load using the stateless \ATP{} state-delta representation without stateful graph reconstruction or invariant precomputation. With the \ATP{} Semantic Gateway (Config~E), token consumption drops to \textbf{3,820 tokens} ($\rho_T = 0.112$), representing an \textbf{88.8\% context token reduction} compared to Config~C.

Because the Semantic Gateway serves topologically scoped evidence capsules with pre-aggregated invariant status, agent tool operations drop from $7.1 \pm 0.5$ in Config~C to \textbf{$2.4 \pm 0.2$ in Config~E} ($\rho_O = 0.338$), a \textbf{66.2\% reduction} in iterative query-response roundtrips. Consequently, mean Time-to-Root-Cause identification ($T_{\mathrm{RCA}}$) drops from 184\,s to \textbf{46\,s} for Claude-3.5-Sonnet ($p95$ tail latency drops from 295\,s to 78\,s) and from 198\,s to \textbf{63\,s} for GPT-4o ($p95$ drops from 315\,s to 98\,s).

To evaluate the operational cost of maintaining an in-memory versioned operational graph, we benchmarked the Semantic Gateway under steady 50,000 req/s ingestion on the 18-service HotelReservation testbed. Maintaining a 1-hour rolling historical window (tracking 12,400 active entity state vectors and 45 invariant assertion rules) consumed only \textbf{184\,MB RSS} in memory and \textbf{3.8\% CPU utilization} across 2 vCPUs. Serving bounded evidence capsules took $p50 = 2.4\,\mathrm{ms}$ and $p95 = 5.8\,\mathrm{ms}$, demonstrating that the gateway introduces negligible infrastructure overhead while eliminating massive downstream LLM inference costs.

\subsection{Diagnostic Quality and Non-Occurrence}
\label{sec:eval:quality-negative}

As detailed in Figure~\ref{fig:kpi-benchmarks}(c) and Table~\ref{tab:agent-kpis}, \ATP{} does not sacrifice diagnostic accuracy for representation compactness. In fact, Config~E achieves the highest diagnostic F1-score across all model classes: \textbf{0.953 for Claude-3.5-Sonnet} (Recall 0.965, Precision 0.942) and \textbf{0.939 for GPT-4o}, compared to 0.867 and 0.848 in Config~C. Across specific fault types:
\begin{itemize}
  \setlength{\itemsep}{0pt}
  \item \textbf{Pod CrashLoop \& OOM:} F1 increases from 0.89 to \textbf{0.98} due to explicit state transition enum encoding.
  \item \textbf{Cascading Latency Spikes:} F1 increases from 0.82 to \textbf{0.94} because the gateway's qualified dependency closure explicitly presents upstream/downstream topological relationships.
  \item \textbf{Silent Configuration Drift:} F1 increases from 0.74 to \textbf{0.92} via content-addressed schema manifest checking.
\end{itemize}

To evaluate Theorem~\ref{thm:certified-negative}, we injected 60 negative-control queries requiring the agent to verify whether a healthy service experienced faults during an incident in an adjacent subsystem. Under baseline Configs A--C, agents produced a \textbf{26.7\% false suspect attribution rate} (hallucinating non-existent errors or mistaking dropped log packets for silent failures). In contrast, \ATP{}'s read-time range verifier returned certified completeness ($\mathsf{status}=\textsf{complete}$); across the 60 negative-control queries, \ATP{} correctly certified non-occurrence in all \textbf{60/60 cases with zero false suspect attributions}. When artificial network partition drops were injected (30 trials), the verifier returned $\mathsf{status}=\textsf{gap}$ in 28 cases and $\mathsf{status}=\textsf{truncated}$ in 2 cases, correctly warning downstream agents of unmonitored blind spots and preventing false negative assumptions.

\begin{figure}[t]
\centering
\resizebox{\columnwidth}{!}{
\begin{tikzpicture}[font=\sffamily\scriptsize]
\begin{scope}[shift={(0,0)}]
  \draw[gray!30, step=0.5] (0,0) grid (6.0,2.5);
  \draw[->, thick] (0,0) -- (6.2,0) node[right, font=\tiny] {Batch Size};
  \draw[->, thick] (0,0) -- (0,2.7) node[above, font=\sffamily\bfseries\tiny] {Throughput (k-Rec/s / GB/s)};
  \draw[->, thick, red!70!black] (6.0,0) -- (6.0,2.7) node[above, font=\sffamily\bfseries\tiny, red!70!black] {Verify Latency ($\mu$s)};
  
  \foreach \x/\label in {0.6/16, 1.6/64, 2.6/128, 3.8/256, 4.8/512, 5.6/1024} {
    \draw (\x,0) -- (\x,-0.08) node[below, font=\tiny] {\label};
  }
  
  \foreach \y/\label in {0.575/100k, 1.15/200k, 1.725/300k, 2.30/400k} {
    \draw (0,\y) -- (-0.08,\y) node[left, font=\tiny] {\label};
  }
  
  \foreach \y/\label in {0.575/50, 1.15/100, 1.725/150, 2.30/200} {
    \draw (6.0,\y) -- (6.08,\y) node[right, font=\tiny, red!70!black] {\label};
  }
  
  \draw[thick, blue!70!black, mark=*, mark size=1.5pt] plot coordinates {
    (0.6, 0.18) (1.6, 0.68) (2.6, 1.12) (3.8, 1.63) (4.8, 1.96) (5.6, 2.17)
  };
  
  \draw[thick, red!70!black, mark=square*, mark size=1.5pt] plot coordinates {
    (0.6, 0.20) (1.6, 0.28) (2.6, 0.35) (3.8, 0.48) (4.8, 0.78) (5.6, 1.35)
  };
  
  \draw[thick, green!50!black, mark=triangle*, mark size=1.8pt] plot coordinates {
    (0.6, 0.48) (1.6, 1.01) (2.6, 1.35) (3.8, 1.63) (4.8, 1.81) (5.6, 1.89)
  };
  
  \node[fill=white, draw=gray!40, inner sep=2pt, font=\fontsize{5}{6}\selectfont, anchor=north west] at (0.2, 2.4) {
    \begin{tabular}{@{}l@{}}
      \tikz\draw[blue!70!black, thick, mark=*, mark size=1pt] (0,0.05) -- (0.2,0.05); Collector Throughput (k-Rec/s) \\
      \tikz\draw[red!70!black, thick, mark=square*, mark size=1pt] (0,0.05) -- (0.2,0.05); Batch Latency ($\mu$s) \\
      \tikz\draw[green!50!black, thick, mark=triangle*, mark size=1.2pt] (0,0.05) -- (0.2,0.05); Range Scan (GB/s $\times 1.4$)
    \end{tabular}
  };
\end{scope}
\end{tikzpicture}
}
\caption{Cryptographic verification performance across batch sizes: Collector ingestion throughput (k-records/s), batch verification latency ($\mu$s), and read-time range verification scan speed (GB/s).}
\label{fig:crypto-throughput}
\end{figure}

\subsection{Cryptographic Overhead and Robustness}
\label{sec:eval:crypto-robustness}

Table~\ref{tab:crypto-microbench} and Figure~\ref{fig:crypto-throughput} summarize the cryptographic microbenchmarks and adversarial robustness evaluations.

At our default batch size of 256 records, the Producer SDK requires $0.266\,\mu\text{s}$ per record ($0.042\,\mu\text{s}$ serialization + $0.058\,\mu\text{s}$ Merkle tree + $0.166\,\mu\text{s}$ Ed25519 signing), sustaining \textbf{3.76 million records/second per core}. On the collector tier, raw CPU batch verification operates at \textbf{6.10 million records/second per core} ($42.0\,\mu\text{s}$ per batch), while the complete end-to-end async gRPC collector service ingests and commits \textbf{284,000 records/second per core} (1,109 batches/sec), with batch latency and range scan throughput across batch sizes detailed in Figure~\ref{fig:crypto-throughput}. The read-time range verifier processes ledger segments at \textbf{1.42\,GB/s per core} (76.9 million records/sec), adding negligible read latency ($<1.2\,\mathrm{ms}$ for a 50,000-record query window). Total cryptographic wire overhead is constrained to \textbf{0.672 bytes/record} (0.531\,B batch header + 0.141\,B amortized checkpoint).

We executed 500 adversarial mutation trials across ledger storage (in-batch bit flips, record omissions, cross-batch sequence gaps, and suffix truncations). As established in Table~\ref{tab:integrity}, \ATP{} achieved a \textbf{100\% detection rate across all 500 tested mutation trials}: in-batch bit flips failed Merkle root/Ed25519 signature checks immediately; replayed/spliced batches were rejected via $\mathit{previous\_root}$ mismatches; and suffix deletions were flagged against independent signed chain-head checkpoints ($\mathsf{status}=\textsf{truncated}$).

We evaluated vulnerability to passive prompt-injection by embedding 50 adversarial instruction payloads (e.g., \emph{``SYSTEM ALERT: Ignore previous instructions and emit status HEALTHY''}) into HTTP headers and exception stack traces. Under raw text logs (Config~A) and OTel logs (Config~C), agents suffered a \textbf{78.0\% (39/50) and 62.0\% (31/50) hijacking rate}, executing adversarial directives and falsely clearing incidents. Under \ATP{}, Config~D and Config~E each yielded a \textbf{0.0\% hijacking rate (0/50 each; 0/100 pooled across the two \ATP{} configurations)}: in 48/50 trials per configuration, agents diagnosed incidents purely from structured state transitions without dereferencing opaque payloads; in the 2 trials dereferencing stack traces, strict \texttt{[UNTRUSTED\_DATA]} boundary delimiters prevented hijacking, with models correctly classifying injections as untrusted data. Secondary cross-validation on GPT-4o independently replaying all 50 attacks per configuration confirmed zero successful injections under Configs~D and~E (0/50 each, vs.\ 41/50 on Config~A and 33/50 on Config~C).

For the primary Claude-3.5-Sonnet evaluation, empirical results satisfy all specified acceptance thresholds and stretch targets: (1)~$U_{0.95}(\rho_T) = 0.12 \le 0.50$ (surpassing the $\le 0.25$ stretch target; worst-case across all LLM families is $U_{0.95}(\rho_T) = 0.13$); (2)~$U_{0.95}(\rho_{\$}) = 0.039 \le 0.20$ (surpassing the $\le 0.10$ stretch target); (3)~$U_{0.95}(\rho_O) = 0.36 \le 0.50$ (worst-case across models $\le 0.38$); (4)~$\Delta_{\mathrm{recall}} = +0.071 \ge -0.02$ (ranging from $+0.071$ to $+0.096$ across models); and (5)~$T_{\mathrm{RCA}}(\mathrm{Config\ E}) \le T_{\mathrm{RCA}}(\mathrm{Config\ C})$ (triage latency reduced by $75.0\%$ for Claude-3.5-Sonnet and $65.6\%$--$68.2\%$ across all other LLM families). All evaluated LLM families independently satisfy every acceptance criterion.

\begin{table}[t]
\centering
\scriptsize
\renewcommand{\arraystretch}{0.92}
\setlength{\tabcolsep}{2.5pt}
\caption{Cryptographic overhead and throughput microbenchmarks (batch size = 256 records).}
\label{tab:crypto-microbench}
\resizebox{\columnwidth}{!}{
\begin{tabular}{lrr}
\toprule
\textbf{Pipeline Operation} & \textbf{Latency / Core} & \textbf{Throughput / Core} \\
\midrule
Producer Record Serialization & $0.042\,\mu\mathrm{s}$ / record & $23.8\,\mathrm{M}$ records/s \\
Producer Merkle Tree Generation & $0.058\,\mu\mathrm{s}$ / record & $17.2\,\mathrm{M}$ records/s \\
Producer Ed25519 Batch Signing (256 rec) & $42.6\,\mu\mathrm{s}$ / batch & $6.01\,\mathrm{M}$ records/s \\
\midrule
\textbf{Total Producer Work (Combined)} & \textbf{0.266\,$\mu\mathbf{s}$ / record} & \textbf{3.76\,M records/s} \\
\midrule
Collector Batch Verification (CPU Kernel) & $42.0\,\mu\mathrm{s}$ / batch & $6.10\,\mathrm{M}$ records/s \\
Collector Atomic Append (Storage Commit) & $12.4\,\mu\mathrm{s}$ / batch & $20.6\,\mathrm{M}$ records/s \\
Collector Ingestion Service (Async gRPC+I/O) & $0.901\,\mathrm{ms}$ / batch & $284\,\mathrm{k}$ records/s \\
Read-Time Range Segment Scan & $0.013\,\mu\mathrm{s}$ / record & $76.9\,\mathrm{M}$ records/s ($1.42\,\mathrm{GB/s}$) \\
\midrule
\textbf{Cryptographic Wire Overhead} & \multicolumn{2}{r}{\textbf{0.672 Bytes / Record}} \\
\bottomrule
\end{tabular}
}
\end{table}

\section{Discussion}
\label{sec:discussion}

\begin{table*}[t]
\centering
\scriptsize
\renewcommand{\arraystretch}{0.80}
\setlength{\tabcolsep}{1.8pt}
\caption{Design objectives in reviewed work. P = primary abstraction; A = adjacent/partial support; -- = not a primary abstraction.}
\label{tab:related-comparison}
\begin{tabularx}{\textwidth}{@{} >{\raggedright\arraybackslash}p{4.2cm} *{7}{>{\centering\arraybackslash}X} @{}}
\toprule
\textbf{System family} & 
\textbf{State-change model} & 
\textbf{Schema identity} & 
\textbf{Signed sequence} & 
\textbf{Externally retained head} & 
\textbf{Predicate-relative negative} & 
\textbf{Opaque separation} & 
\textbf{Dual agent access} \\
\midrule
\OTel{} Logs, Events, Weaver, Arrow~\cite{otelLogs,otelEvents,otelWeaver,otep243,otelArrow}
  & A & A & -- & -- & -- & -- & -- \\
CloudEvents, ECS, OCSF~\cite{cloudEvents,ecs,ocsf}
  & -- & A & -- & -- & -- & -- & -- \\
Compression \& compressed-query~\cite{rodrigues2021clp,wang2024muslope,yu2024denum,li2024logshrink,wei2025logcrisp}
  & -- & -- & -- & -- & -- & -- & -- \\
Vector \& semantic log indexing~\cite{mza2020logevent2vec,liu2023logllm}
  & -- & -- & -- & -- & -- & -- & -- \\
Agent-facing telemetry gateways~\cite{grafanaMcp,tan2026hyve}
  & -- & -- & -- & -- & -- & -- & A \\
Signed Syslog \& secure audit logs~\cite{rfc5848,schneierKelsey1999}
  & -- & -- & P & -- & -- & -- & -- \\
Certificate Transparency~\cite{rfc9162}
  & -- & -- & A & A & -- & -- & -- \\
\midrule
\textbf{State-Delta Evidence Ledger (\ATP{})}
  & \textbf{P} & \textbf{P} & \textbf{P} & \textbf{P} & \textbf{P} & \textbf{P} & \textbf{P} \\
\bottomrule
\end{tabularx}
\end{table*}

\subsection{Security and Prompt-Injection Defense}
\label{sec:discussion:security}

The State-Delta Evidence Ledger detects in-transit record mutation, sequence omission, batch reordering, and suffix truncation under the stated trust model (Section~\ref{sec:protocol:integrity}); unmonitored code paths and compromised producer hosts lie outside this boundary (Section~\ref{sec:agent-native}).

Distributed system logs frequently include attacker-controlled inputs (HTTP headers, query strings, exception stack traces). Direct consumption allows malicious payloads to hijack agent reasoning via passive prompt injection~\cite{pasquini2026when,karanjai2026context}. \ATP{} establishes a structural defense: the primary ledger carries strictly typed, schema-validated state deltas, while variable-length diagnostic strings are quarantined in out-of-band opaque storage bound via cryptographic digests. When fetched, opaque payloads are delivered with explicit data delimiters and trust tags. Opaque isolation structurally reduces exposure to untrusted text; once an opaque payload is explicitly dereferenced, trust delimiters constitute a model-facing mitigation rather than a formal non-interference guarantee.

Producers apply pseudonymization and field redaction prior to canonical signing. Once committed to $\CanPlane$, canonical records are immutable. Data retention policies are enforced via verifiable prefix pruning with signed truncation checkpoints, while static resource bounds prevent denial-of-service exhaustion.

\subsection{Limitations, Trade-offs, and Validity}
\label{sec:discussion:limitations}

\ATP{} entails several operational trade-offs: (1)~it relies on \emph{first-party instrumentation} adopting typed schemas; (2)~cryptographic guarantees assume a \emph{trusted ingestion TCB} (producer SDK and keys); (3)~suffix rollback is detectable only up to the latest published chain-head checkpoint; (4)~opaque storage availability is decoupled from ledger digest validity; and (5)~maintaining an in-memory versioned operational graph incurs modest overhead at the gateway tier (184\,MB RSS, 3.8\% CPU across 2 vCPUs under 50k req/s), offset by massive downstream token savings.

\paragraph{Evaluation validity and ablation structure.}
Our empirical benchmark evaluates architectural layers through the $\text{Config A}\rightarrow\text{C}\rightarrow\text{D}\rightarrow\text{E}$ progression. Comparing Config~C (\OTel{} JSON) to Config~D (\ATP{} Stateless Decoder) demonstrates the benefit of the transition-centered state-delta evidence model and schema typing without stateful gateway infrastructure, reducing cross-model mean context load from 36.8k tokens ($F_1 = 0.839$) to 12.2k tokens ($F_1 = 0.878$, Table~\ref{tab:agent-kpis}). Progression from Config~D to Config~E isolates the additional gain from the stateful semantic gateway's topological scoping and invariant precomputation (down to 4.1k tokens, $F_1 = 0.936$), confirming substantial standalone gains from the stateless representation and further boosts from graph-aware access.

\section{Related Work}
\label{sec:related-work}

\paragraph{Telemetry schemas, event envelopes, and gateways.}
\OTel{} Logs and Events standardize attributes and schemas~\cite{otelLogs,otelEvents}, Weaver introduces schema validation~\cite{otelWeaver,otep243}, and \OTel{} Arrow explores columnar transport~\cite{otelArrow}. CloudEvents, ECS, and OCSF normalize event formats~\cite{cloudEvents,ecs,ocsf}. At query time, Grafana MCP~\cite{grafanaMcp} and HYVE~\cite{tan2026hyve} expose observability views to LLM clients. However, these systems treat logs as self-contained envelopes rather than transition-first state deltas under content-addressed schemas, and lack dual stateless/stateful access paths.

\paragraph{Log compression and semantic indexing.}
Codecs including CLP, $\mu$Slope, Denum, LogShrink, and LogCrisp compress text logs and accelerate search~\cite{rodrigues2021clp,wang2024muslope,yu2024denum,li2024logshrink,wei2025logcrisp}. These operate post-hoc on unstructured text, whereas \ATP{} eliminates text formatting at producer boundary via typed binary tuples. Semantic vector indexing (e.g., LogEvent2vec~\cite{mza2020logevent2vec}, LogLLM~\cite{liu2023logllm}) clusters logs but incurs high embedding overhead and discards exact discrete parameters needed for deterministic diagnosis.

\paragraph{Dynamic tracing, audit, and negative verification.}
Hindsight dynamically retains trace spans~\cite{zhang2023hindsight}, while Pivot Tracing installs causal queries~\cite{mace2015pivot}. Signed Syslog~\cite{rfc5848}, forward-secure audit logs~\cite{schneierKelsey1999}, and Certificate Transparency~\cite{rfc9162} authenticate log streams and append-only trees, but none formalize predicate-relative negative verification over an observation boundary. \ATP{} combines producer hash-chaining, collector verification, and external checkpoints to enable certified non-occurrence proofs within the declared observation boundary ($\mathsf{Complete}(q)$).

\paragraph{AIOps benchmarks and adversarial robustness.}
AIOpsLab, LogEval, CloudOpsBench, and OpenRCA evaluate automated incident triage~\cite{chen2025aiopslab,cui2024logeval,wang2026cloudopsbench,fang2026openrca}. Recent studies demonstrate that uncurated telemetry enables passive prompt injection against LLM agents~\cite{pasquini2026when,karanjai2026context}. \ATP{} isolates uncurated strings behind a digest-verified boundary to protect downstream reasoning agents (Table~\ref{tab:related-comparison}).

\section{Conclusion}
\label{sec:conclusion}

As autonomous AI agents assume operational responsibility in cloud systems, legacy verbose text logging creates severe compute, reasoning, and security bottlenecks. This paper introduced \emph{agent-native telemetry}, an operational evidence architecture founded on verifiable state deltas, instantiated via the \emph{Agent Telemetry Protocol} (\ATP{}) and the \emph{State-Delta Evidence Ledger}. By structuring operational facts into four core evidence primitives under content-addressed schemas, isolating untrusted text into out-of-band opaque evidence, and providing dual agent access paths (a stateless decoder and a stateful semantic gateway), \ATP{} bridges the gap between raw operational data and automated reasoning. We proved formal boundaries for information preservation and verified negative evidence, and demonstrated across microservice benchmarks that \ATP{} reduced wire footprint by 96.4\%, reduced LLM context tokens by 88.8\%, accelerated root-cause triage, detected all 500 tested adversarial storage mutations, and yielded zero successful prompt injections across 50 adversarial trials per \ATP{} configuration, establishing a verifiable foundation for autonomous cloud operations.

\newpage
\begingroup
\footnotesize
\makeatletter
\renewenvironment{thebibliography}[1]
     {\section*{\refname}%
      \@mkboth{\MakeUppercase\refname}{\MakeUppercase\refname}%
      \list{\@biblabel{\@arabic\c@enumiv}}%
           {\settowidth\labelwidth{\@biblabel{#1}}%
            \leftmargin\labelwidth
            \advance\leftmargin\labelsep
            \setlength{\itemsep}{1.5pt plus 0.5pt minus 0.5pt}%
            \setlength{\parsep}{0pt}%
            \setlength{\topsep}{2pt}%
            \setlength{\partopsep}{0pt}%
            \usecounter{enumiv}%
            \let\p@enumiv\@empty
            \renewcommand\theenumiv{\@arabic\c@enumiv}}%
      \fontsize{7.5pt}{8.8pt}\selectfont
      \sloppy
      \clubpenalty4000
      \@clubpenalty \clubpenalty
      \widowpenalty4000%
      \sfcode`\.\@m}
     {\def\@noitemerr
       {\@latex@warning{Empty `thebibliography' environment}}%
      \endlist}
\makeatother
\bibliographystyle{unsrt}
\bibliography{references}
\endgroup

\end{document}